\documentclass[11pt]{article}

\usepackage{amsfonts,amsmath,amsthm,times,fullpage,amssymb}

\usepackage[hidelinks]{hyperref}

\usepackage[dvipsnames]{xcolor}

\usepackage[noend]{algpseudocode}
\usepackage[nothing]{algorithm}  
\usepackage{caption}
 
\newtheorem{theorem}{Theorem} 
\newtheorem{lemma}{Lemma}
\newtheorem{definition}{Definition} 
 
\newtheorem{remark}{Remark}  
  
\newtheorem{proposition}{Proposition}

\newtheorem*{harmonic}{Harmonic Walks}
\newtheorem*{regeneration}{Regeneration}
\newtheorem*{gLLL}{General LLL}

\newtheorem*{cd}{Potential Causality Digraph}
\newtheorem*{causality}{Potential Causality}

\newcommand\address{{\sc Address}}

\newcommand{\Indep}{\mathrm{Ind}}
\newcommand{\List}{\mathrm{List}}  
\newcommand{\Span}{\mathcal{S}}

\newcommand{\pres}{\xi}
\newcommand{\res}{\xi}

\newcommand{\param}{\psi}

\begin{document}

\title{Focused Stochastic Local Search and the Lov\'{a}sz Local Lemma}

\author{
Dimitris Achlioptas
\thanks{Research supported by NSF grant CCF-1514128.}
\\ Department of Computer Science\\ University of California Santa Cruz
\\
\and 
Fotis Iliopoulos\thanks{ 
Research supported by NSF grant CCF-1514434.} 
 \\ Department of Electrical Engineering and Computer Science\\ 
 University of California Berkeley}

\date{\empty}

\maketitle

\begin{abstract}
We develop tools for analyzing focused stochastic local search algorithms. These are algorithms which search a state space probabilistically by repeatedly selecting a constraint that is violated in the current state and moving to a random nearby state which, hopefully, addresses the violation without introducing many new ones. A large class of such algorithms arise from the algorithmization of the Lov\'{a}sz Local Lemma, a non-constructive tool for proving the existence of satisfying states. Here we give tools that provide a unified analysis of such algorithms and of many more, expressing them as instances of a general framework. 
\end{abstract}

\newpage
\section{Introduction}\label{sec:intro}

Let $\Omega$ be a large finite set of objects and let $F = \{f_1, f_2, \ldots, f_m \}$ be a collection of subsets of $\Omega$. We will refer to each $f_i \in F$ as a \emph{flaw} to express that its elements have negative feature $i \in [m]$. For example, given a CNF formula on $n$ variables with clauses $c_1,c_2,\ldots,c_m$, we can define for each clause $c_i$ the flaw (subcube) $f_i \subseteq \{0,1\}^n$ whose elements violate $c_i$. Following linguistic rather than mathematical convention we say that $f$ is present in $\sigma$ if $f \ni \sigma$ and that $\sigma \in \Omega$ is \emph{flawless} (perfect) if no flaw is present in $\sigma$. 

Our goal is to develop tools for analyzing stochastic local search algorithms for finding perfect objects. Importantly, our analysis \emph{will not} assume that $\Omega$ contains perfect objects, but rather will establish their existence by proving that an algorithm converges (quickly) to one. The general idea in stochastic local search is that $\Omega$ is equipped with a neighborhood structure and that the search starts at some element (state) of $\Omega$ and moves from state to state along the neighborhood structure. 
\emph{Focused} local search corresponds to the case where each state change can be attributed to an effort to rid the state of some specific present flaw. 
    
Concretely, for each $\sigma \in \Omega$, let $U(\sigma) = \{ f \in F: \sigma \in f \}$, i.e., $U(\sigma)$ is the set of flaws present in $\sigma$. For every flaw $f_i \in U(\sigma) $,  let $A(i,\sigma) \ne \{ \sigma \}$ be a non-empty subset of $\Omega$. We call the elements of $A(i,\sigma)$ \emph{actions} and we consider the multi-digraph $D$ on $\Omega$ which has an arc $\sigma \xrightarrow{i} \tau $ for every $\tau \in A(i,\sigma)  $. We will consider walks on $D$ which start at a state $\sigma_1$, selected according to some probability distribution $\theta$, and which at each non-sink vertex $\sigma$ first select a flaw $f_i \ni \sigma$, as a function of the trajectory so far, and then select a next state $\tau \in A(i,\sigma) $ with probability $\rho_i(\sigma, \tau)$. Whenever a flaw $f_i \ni \sigma$ is selected we will say flaw $f_i$ was addressed (which will not necessarily mean that $f_i$ was eliminated, i.e., potentially $A(i,\sigma) \cap f_i \ne \emptyset$). 

A large class of algorithms for dealing with the setting at this level of generality arise by algorithmizations of the Lov\'{a}sz Local Lemma~(LLL). This is a non-constructive tool for proving the \emph{existence} of flawless objects by introducing a probability measure $\mu$ on $\Omega$, along the lines of the Probabilistic Method
(throughout we assume that products devoid of factors evaluate to 1, i.e., $\prod_{x \in\emptyset} g(x) = 1$ for any $g$).

\begin{gLLL}
Let $\mathcal{A} = \{A_1, A_2,\ldots,A_m\}$ be a set of $m$ events. For each $i \in [m]$, let $D(i) \subseteq	 [m] \setminus \{i\}$ be such that $\mu(A_i \mid \cap_{j \in S} \overline{A_j}) = \mu(A_i)$ for every $S \subseteq [m] \setminus (D(i) \cup \{i\})$. If there exist positive real numbers $\{\psi_i\}_{i=1}^m$ such that for all $i \in [m]$,
\begin{equation}\label{eq:LLL}
\frac{\mu(A_i)}{\psi_i}  \sum_{ S \subseteq \{i \} \cup D(i) }  \prod_{j \in S} \psi_j \le 1 \enspace , 
\end{equation}
then the probability that none of the events in $\mathcal{A}$ occurs is at least $\prod_{i=1}^m 1/(1+\psi_i) > 0$. 
\end{gLLL}

\begin{remark}
Condition~\eqref{eq:LLL} above is equivalent to the more well-known form $\mu(A_i) \le x_i \prod_{j \in D(i)} (1-x_j)$, 
where $x_i = \psi_i/(1+\psi_i)$. As we will see, formulation~\eqref{eq:LLL} facilitates refinements. 
\end{remark}
Erd\H{os} and Spencer~\cite{LopsTrav} noted that independence in the LLL can be replaced by \emph{negative correlation}, yielding the stronger Lopsided LLL. The  difference is that each set $D(i)$ is replaced by a set $L(i) \subseteq [m] \setminus \{ i \}$ such that $\mu(A_i \mid \cap_{j \in S} \overline{A_j}) \le \mu(A_i)$ for every $S \subseteq [m] \setminus (L(i) \cup \{i\})$, i.e., ``=" is replaced by ``$\le$".\medskip

In a landmark work~\cite{MT}, Moser and Tardos made the general LLL constructive for \emph{product} measures over explicitly presented variables. Specifically, in the \emph{variable setting} of~\cite{MT}, each event $A_i$ is determined by a set of variables $\mathrm{vbl}(A_i)$ so that $ j \in D(i)$ iff $\mathrm{vbl}(A_i) \cap \mathrm{vbl}(A_j) \ne \emptyset$. Moser and Tardos proved that if~\eqref{eq:LLL} holds, then repeatedly selecting \emph{any} occurring event $A_i$ (flaw present) and resampling every variable in $\mathrm{vbl}(A_i)$ independently of all others, leads to a flawless object after a linear expected  number of resamplings. Pegden~\cite{PegdenIndepen} extended the result of~\cite{MT} to the cluster expansion criterion of Bissacott et al.~\cite{bissacot2011improvement}, and  Kolipaka and Szegedy~\cite{szege_meet} extended it to Shearer's criterion (the most general LLL criterion). Beyond the variable setting, Harris and Srinivasan in~\cite{SrinivasanPerm} algorithmized the general LLL for the uniform measure on permutations while, very recently, Harvey and Vondr\'{a}k~\cite{HV} algorithmized the Lopsided LLL up to Shearer's criterion assuming efficiently implementable \emph{resampling oracles}. Resampling oracles, introduced in~\cite{HV}, elegantly capture a common constraint in all prior algorithmizations of LLL, namely that state transitions must be ``compatible" with the measure $\mu$. Below we give the part of the definition of resampling oracles that exactly expresses this notion of compatibility, which we dub (measure) \emph{regeneration}.
\begin{regeneration}\emph{(\cite{HV})}[Resampling Oracles]
\label{def:loc_regen}
Say that $(D,\rho)$ \emph{regenerate $\mu$ at flaw $f_i$} if for every $\tau \in \Omega$,
\begin{equation}\label{eq:oracle}
\frac{1}{\mu(f_i)}\sum_{\sigma \in f_i} \mu(\sigma) \rho_{i}(\sigma,\tau) = \mu(\tau) \enspace .
\end{equation}
\end{regeneration}

Observe that the l.h.s.\ of~\eqref{eq:oracle} is the probability of reaching $\tau$ after first sampling a state $\sigma \in f_i$ according to $\mu$ and then addressing $f_i$ at $\sigma$ according to $(D,\rho)$. The requirement that this probability equals $\mu(\tau)$ for every $\tau \in \Omega$ means that in every state $\sigma \in f_i$ the distribution of actions for addressing $f_i$ must be such that it removes the conditional $f_i \ni \sigma$. A trivial way to satisfy this (very stringent) requirement is to sample a new state $\sigma'$ according to $\mu$ in each step (assuming $\mu$ is efficiently sampleable). Doing this, though, removes any sense of progress, as the set of flaws present in $\sigma'$ are completely unrelated to those in $\sigma$. Instead, we would like to achieve~\eqref{eq:oracle} while limiting the set of flaws that may be present in $\sigma'$ that were not present in $\sigma$. For example, note that in the variable setting resampling every variable in $\mathrm{vbl}(f_i)$ independently satisfies~\eqref{eq:oracle}, while only having the potential to introduce flaws that share at least one variable with $f_i$. It is, thus, natural to consider the following ``projection" of the action digraph introduced in the flaws/actions framework of~\cite{AI}.

\begin{causality}
For an arc $\sigma \xrightarrow{i}  \tau$ in $D$ and a flaw $f_j$ present in $\tau$ we say that $f_i$ causes $f_j$ if $f_i = f_j$ or $f_j \not\ni \sigma$. If $D$ contains \emph{any} arc in which $f_i$ causes $f_j$ we say that $f_i$ \emph{potentially causes} $f_j$.
\end{causality}

\begin{cd}
The digraph $C=C(\Omega,F,D)$ on $[m]$ where $i \rightarrow j$ iff $f_i$ potentially causes $f_j$ is called the potential causality digraph. The \emph{neighborhood} of a flaw $f_i$  is $\Gamma(i) =\{j : i \to  j \text{  exists in $C$}\}$.
\end{cd}
In the interest of brevity we will call $C$ the causality digraph, instead of the potential causality digraph. It is important to note that $C$ contains an arc $i \to j$ if there exists \emph{even one} state transition aimed at addressing $f_i$ that causes $f_j$ to appear in the new state. 

As mentioned, very recently, Harvey and Vondr\'{a}k~\cite{HV} made the Lopsided LLL algorithmic, given resampling oracles for $\mu$. Their result actually makes no reference to the lopsidependency condition, which they prove is implied by the existence of resampling oracles, and can be stated as follows. 

\begin{theorem}[Harvey-Vondr\'{a}k~\cite{HV}]\label{thm:HV}
Let $\Omega, F,\mu, D, \rho$ be such that $(D,\rho)$ regenerate $\mu$ at flaw $f_i$ for every $i \in [m]$. If $\theta = \mu$ and  there exist positive real numbers $\{\psi_i\}_{i=1}^m$ such that for every $i \in [m]$,
\begin{equation}\label{eq:HVLLL}
\frac{\mu(f_i)}{\psi_i}  \sum_{ S \subseteq \Gamma(i)}   \prod_{j \in S} \psi_j  < 1 \enspace , 
\end{equation}
then a perfect object can be found after polynomially many steps on $D$.
\end{theorem}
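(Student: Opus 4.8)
The plan is to analyze the natural algorithm: draw the initial state $\sigma_1$ from $\theta=\mu$ and, as long as the current state $\sigma$ has $U(\sigma)\neq\emptyset$, address some present flaw (say, the one of least index) by moving to $\tau$ with probability $\rho_i(\sigma,\tau)$. Since $A(i,\sigma)$ is non-empty whenever $f_i\ni\sigma$, the only sinks of $D$ are flawless states, so it suffices to show this walk reaches a sink (hence a perfect object) within polynomially many steps with high probability. I would establish this by the witness-tree (``break forest'') method of Moser and Tardos~\cite{MT}, transported from the variable setting to the digraph $D$ via the potential causality digraph $C$ (in the spirit of the flaws/actions framework of~\cite{AI}); the resampling-oracle hypothesis will enter at exactly one place.

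First I would record the log $W=(w_1,w_2,\dots)$ of the indices of addressed flaws and, for each step $t$, build a rooted vertex-labeled tree $\tau_t$: its root is labeled $w_t$, and then, scanning $s=t-1,t-2,\dots,1$, a fresh vertex labeled $w_s$ is attached as a child of the deepest vertex currently present whose label $j$ satisfies $w_s\in\Gamma(j)$ (doing nothing if no such vertex exists). Writing $[v]$ for the label of a vertex $v$, by construction the children of any vertex labeled $i$ carry labels in $\Gamma(i)$, distinct steps $t$ with $w_t=i$ witness distinct trees rooted at $i$, and a short argument shows that if the walk runs for at least $s$ steps then some witness tree on at least $s/m$ vertices ``occurs'', i.e.\ equals $\tau_t$ for some $t$. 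Everything then reduces to two lemmas: \emph{Validity}, that $\Pr[\hat T\text{ occurs}]\le\prod_{v\in\hat T}\mu(f_{[v]})$ for every witness tree $\hat T$, and \emph{Counting}, that $\sum_{\hat T\text{ rooted at }i}\prod_{v\in\hat T}\mu(f_{[v]})\le\psi_i$ for every $i$.

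The Counting Lemma is the standard Galton--Watson estimate: if $T_i$ denotes its left-hand side then $T_i\le\mu(f_i)\prod_{j\in\Gamma(i)}(1+T_j)$ (with the usual treatment of repeated child labels, as in~\cite{PegdenIndepen}), and since $\prod_{j\in\Gamma(i)}(1+\psi_j)=\sum_{S\subseteq\Gamma(i)}\prod_{j\in S}\psi_j$, hypothesis~\eqref{eq:HVLLL} says exactly that $(\psi_i)$ is a supersolution of this recursion, so $T_i\le\psi_i$ by monotone iteration from $0$. Combining the two lemmas, $\mathbb{E}[\,\#\{t:w_t=i\}\,]\le\sum_{\hat T\text{ rooted at }i}\Pr[\hat T\text{ occurs}]\le\psi_i$, so the walk takes at most $\sum_{i=1}^m\psi_i$ steps in expectation and in particular reaches a flawless state almost surely. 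The strict inequality in~\eqref{eq:HVLLL}, with slack factor $1-\varepsilon$, upgrades this: replacing each $\mu(f_i)$ by $\mu(f_i)/(1-\varepsilon)$ still satisfies the Counting recursion, which bounds the total $\prod_v\mu(f_{[v]})$-weight of witness trees on $\ge s/m$ vertices by $(1-\varepsilon)^{s/m}\sum_i\psi_i$; with the ``occurrence'' observation and a union bound, $\Pr[\text{walk lasts}\ge s]\le(1-\varepsilon)^{s/m}\sum_i\psi_i$, i.e.\ termination within polynomially many steps with high probability.

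The main obstacle is the Validity Lemma, and it is the only place the hypotheses $\theta=\mu$ and~\eqref{eq:oracle} are used. In the variable setting one proves it by fixing a ``randomness table'' and observing that the entries a witness tree forces to be consulted are i.i.d.\ from $\mu$; in the present setting there is no such table, and I would instead couple the actual walk with an idealized process that carries a genuinely $\mu$-distributed state between the steps recorded in $\hat T$, invoking~\eqref{eq:oracle} one step at a time. The identity~\eqref{eq:oracle} is exactly what makes this work: it gives $\mu(\tau)^{-1}\sum_{\sigma\in f_i}\mu(\sigma)\rho_i(\sigma,\tau)=\mu(f_i)$ for every $\tau$, so that in the coupling the multiplicative cost charged to a vertex labeled $i$ is precisely $\mu(f_i)$, independent of which action was taken, while $\sigma_1\sim\mu$ makes the base case cost-free. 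Getting this coupling and its conditioning right — so that each recorded vertex contributes the clean factor $\mu(f_{[v]})$ and nothing more — is the delicate part; the tree construction, the Counting Lemma, and the tail bound are then routine.
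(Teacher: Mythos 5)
Your overall architecture (witness structures, a product-form probability bound, a Galton--Watson/Pegden counting step, then a tail bound) is the right family of ideas, and your Counting Lemma and the ``long run $\Rightarrow$ large witness'' step are fine. But there is a genuine gap at the step you yourself flag as delicate: the Validity Lemma for Moser--Tardos-style witness trees does not follow from regeneration alone, and the one-step coupling you sketch does not establish it. Your tree $\tau_t$ records only the causally-linked \emph{subset} of steps preceding $t$; between two recorded vertices the algorithm addresses other flaws, chosen as a function of the state, and the event ``$\hat T$ occurs'' implicitly conditions on what those unrecorded steps did and did not do. That conditioning is not measure-preserving, so you cannot maintain the invariant that the state is $\mu$-distributed when a recorded vertex is reached; the clean factor $\mu(f_{[v]})$ per vertex is exactly what breaks. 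In the variable setting of~\cite{MT} this is rescued by the static resampling table (the tree determines which independent table entries are consulted, regardless of interleaved steps), and in the permutation setting~\cite{SrinivasanPerm} by a lengthy problem-specific coupling; in the general resampling-oracle setting the witness-tree lemma is known to need extra structure (this is the ``commutativity'' issue studied in~\cite{Kolmogorov15}), which is precisely why~\cite{HV} and this paper avoid it. Summing the full-log bound over all logs compatible with a fixed tree does not repair the argument either, since the unrecorded steps can be arbitrary flaws and $m\cdot\max_i\mu(f_i)$ need not be below $1$.

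The paper's route sidesteps this entirely: the induction/coupling is applied only to \emph{prefix} events, bounding $\Pr[W_t=A]$ for the complete sequence of addressed flaws (Lemma~\ref{lem:prod_form}, i.e.\ the regenerative case of Lemma~\ref{lemma:regeneration}) --- there the conditioning can be relaxed to ``$\sigma_i\in a_i$'' and the $\mu$-distribution invariant survives --- and then \emph{all} $t$ steps, not a causal subset, are injected into Break/Recursive Forests (Appendix~\ref{sec:forests}) whose total weight is controlled by the branching process of Theorem~\ref{lem:master}, yielding Theorem~\ref{asymmetric} and hence Theorem~\ref{thm:HV} with $\gamma_i=\mu(f_i)$. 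To fix your write-up you should either prove your Validity Lemma under an explicit additional hypothesis (e.g.\ a resampling table or commutativity), or switch the probabilistic core to the full-sequence bound and redo the counting over forests that record every step, as the paper does.
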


In fact, in~\cite{HV} it was shown that the conclusion of Theorem~\ref{thm:HV} holds also if~\eqref{eq:HVLLL} is replaced by Shearer's condition~\cite{Shearer}. Thus, the work of Harvey and Vondr\'{a}k~\cite{HV} marks the end of the road for the derivation and analysis of focused stochastic local search algorithms by algorithmizations of the Lov\'{a}sz Local Lemma.

In this work we extend the flaws/actions algorithmic framework of~\cite{AI} to arbitrary measures and action graphs and connect it to the Lov\'{a}sz Local Lemma. The result is a theorem that subsumes both Theorem~\ref{thm:HV} and all results of~\cite{AI}, establishing a method for designing and analyzing focused stochastic local search algorithms that goes far beyond algorithmizing the LLL.

Concretely, in~\cite{AI} we introduced and analyzed focused local search algorithms where:
\begin{itemize}
\item
$D$ is \emph{atomic}, i.e., for every $\tau \in \Omega$ and every $i\in [m]$ there exists at most one arc $\sigma \xrightarrow{i} \tau$.
\item
$\mu$ is the \emph{uniform} measure on $\Omega$, i.e., $\mu(\cdot) = |\Omega|^{-1}$.
\item
$\rho$ assigns \emph{equal} probability to every action in $A(i,\sigma)$, for every $f_i \in U(\sigma)$, at every flawed $\sigma \in \Omega$.  
\end{itemize}
Here we generalize to {\bf arbitrary} $D,\rho,\mu$, allowing one to trade the sophistication of the measure $\mu$ against the sparsity of the causality graph (while removing the need to sample from $\mu$, or to regenerate $\mu$). Moreover, for the special case of uniform $\mu$, we improve the condition of~\cite{AI} for convergence. We state our results formally in the next section. \medskip

We also make a conceptual contribution by identifying for each measure $\mu$ certain pairs $(D,\rho)$ as special.
\begin{harmonic}
$(D,\rho,\mu)$ are \emph{harmonic} if for every $i \in [m]$ and every transition $(\sigma,\tau ) \in   f_i \times A(i,\sigma)$,
\begin{equation}\label{eq:harmonic_rho_def} 
\rho_i(\sigma,\tau) =  \frac{\mu(\tau)}{ \sum_{\sigma' \in A(i,\sigma) } \mu(\sigma')} \propto \mu(\tau) \enspace .
\end{equation}
\end{harmonic}
In words, when $(D,\rho,\mu)$ are harmonic $\rho_i$ assigns to each state in $A(i,\sigma)$ probability proportional to its probability under $\mu$. It is easy to see that $(D,\rho,\mu)$ are harmonic both in the algorithm of Moser and Tardos~\cite{MT} for the variable setting and in the algorithm of Harris and Srinivasan~\cite{SrinivasanPerm} for the uniform measure on permutations. There are two reasons why harmonic $(D,\rho,\mu)$ combinations are interesting.

\subsection{Resampling Oracles via Atomic Actions}

If we start with the Probabilistic Method setup, i.e., $\Omega, F$, and $\mu$, then to get a constructive result by LLL algorithmization we must design $(D,\rho)$ that regenerate $\mu$ at every flaw $f_i \in F$ (note that Theorem~\ref{thm:HV} \emph{assumes} such $(D,\rho)$ as input). While, in general, this can be a daunting task, we show that if we restrict our attention to $D$ that are atomic matters are dramatically simplified:
\begin{itemize}
\item
$(D,\rho, \mu)$ \emph{must} be harmonic, yielding a \emph{local} characterization of $(D,\rho)$ at every state. \hfill(Theorem~\ref{atomic_oracles})
\item 
The probability of every sequence of states is \emph{characterized} by the flaws addressed. \hfill(Theorem~\ref{tight})
\item
The initial state can be \emph{arbitrary}, i.e., we can have $\theta \neq \mu$. \hfill(Theorems~\ref{asymmetric} --
\ref{lem:master})
\end{itemize}

We note that \emph{all} previous LLL algorithmizations, including~\cite{HV}, require $\theta = \mu$. We remove this at a mere cost of adding $O(\log|\Omega|)$ to the running time. This  is beneficial in settings where sampling  from $\mu$ (a global property) is hard, but nonetheless we can  regenerate $\mu$ at every flaw (a local property).
\smallskip

Atomicity may initially seem artificial and/or restrictive. In reality, it is a very natural way to promote search space exploration, as it is equivalent to the following: $A(i,\sigma) \cap A(i,\sigma') = \emptyset$ for every $\sigma \neq \sigma' \in f_i$. Moreover, in most settings atomicity can be achieved in a straightforward manner. For example, in the variable setting the following two conditions combined imply atomicity:
\begin{enumerate}
\item\label{satlike}
Each flaw forbids exactly \emph{one} joint value assignment to its underlying variables, i.e., is a subcube. 
\item\label{nospooky}
Each state transition modifies \emph{only} the variables of the flaw it addresses. 
\end{enumerate} 

Condition~\ref{satlike} expresses a primarily syntactic requirement: compound constraints must be broken down to constituent parts akin to satisfiability constraints. In most settings, not only is such a breakdown straightforward, but is also advantageous, as it affords a more refined accounting of conflict between constraints.

Condition~\ref{nospooky} on the other hand reflects ``focusing", i.e., that every state transformation should be the result of attempting to eradicate some specific flaw $f_i$ without interfering with variables not in $\mathrm{vbl}(f_i)$. 

\subsection{Beyond Measure Regeneration}

Designing $(D,\rho)$ to regenerate $\mu$ at every flaw can be highly restrictive. This is commonly demonstrated by LLL's inability to establish that a graph with maximum degree $\Delta$ can be colored with $ q = \Delta + 1$ colors, one of the oldest and most vexing concerns about the LLL (see the survey of Szegedy~\cite{mario_survey}). This is because the regeneration of the uniform measure implies that to recolor a vertex $v$ we must select uniformly among all colors, rather than colors not appearing in $v$'s neighborhood, inducing a requirement of $q > \mathrm{e}\Delta$ colors. 

In~\cite{AI} we introduced the flaws/actions framework to initiate the study of local search algorithms whose actions can depend arbitrarily on the state. In the aforementioned example this could mean choosing only among colors not appearing in $v$'s neighborhood, so that as soon as $q \ge \Delta+1$, the causality digraph becomes empty and rapid termination follows trivially. In~\cite{AI}, we required $D$ to be atomic and actions to be chosen uniformly, a setting that in our current framework can be seen as the special case where $\mu$ is uniform, $D$ is atomic, and $(D,\rho,\mu)$ are harmonic. Here we consider general harmonic $(D,\rho,\mu)$, i.e., $D$ need not be atomic and $\mu$ can be arbitrary, while $(D,\rho)$ need not regenerate $\mu$. Rather, $\mu$ is only used as a gauge of progress and deviation from regeneration is traded-off against sparsity of the causality graph.

The reason we focus on harmonic $(D,\rho,\mu)$ is that, as we will see, (i) they are optimal with respect to the aforementioned trade-off,  and (ii) for every $D$ and $\mu$ there exists $\rho$ such that $(D,\rho,\mu)$ are harmonic.

\section{Statement of Results}\label{sec:results}

\begin{definition}\label{def:nameless}
For $i \in [m]$ and  $\tau \in \Omega$, let $b_i^{\tau} = |\{\sigma \in f_i: \tau \in A(i,\sigma)\}|$. For $i \in [m]$, let  $b_i = \max_{\tau \in \Omega } b_i^{\tau}$. If $b_i = 1$ for all $i \in [m]$, then we say that $D$ is \emph{atomic}. (Note that $b_i>0$ since $A(i,\sigma) \neq \emptyset$ for $\sigma \in f_i$.)
\end{definition}

\subsection{Setup}\label{sec:setup}
We establish general conditions under which focused stochastic local search algorithms find flawless objects quickly. Recall that any such algorithm performs a random walk on a multi-digraph $D$ which (i) starts at a state $\sigma_1 \in \Omega$ selected according to a distribution $\theta$, and which (ii) at each flawed state $\sigma$ first selects some $f_i \ni \sigma$ to address and then selects $\tau \in A(i,\sigma)$ as the next state, where each $\tau \in A(i,\sigma)$ is selected with probability $\rho_i(\sigma,\tau)$. As one may expect the flaw-choice mechanism does have a bearing on the running time of such algorithms and we discuss this point in Section~\ref{sec:fc}. Our results focus on conditions for rapid termination that do not require sophisticated flaw choice (but can be used in conjunction which such choice). 

To establish a walk's capacity to rid the state of flaws we introduce a measure $\mu$ on $\Omega$. Without loss of generality, and to avoid certain trivialities, we assume that $\mu(\sigma)>0$ for all $\sigma \in \Omega$.  The choice of $\mu$ is entirely ours and can be trivial, i.e., $\mu(\cdot) = |\Omega|^{-1}$. Typically, $\mu$ assigns only exponentially small probability to flawless objects, yet allows us to prove that the walk reaches such an object in polynomial expected time. Its role is to define a ``charge" $\gamma_i = \gamma_i(D,\theta,\rho,\mu)$ for each flaw $f_i \in F$, ideally as small as possible.

\subsection{Definition of Flaw Charges}

\noindent {\bf Regenerative case.}
If $(D,\rho)$ regenerate $\mu$ at every $f_i \in F$ and either $\theta = \mu$ or $D$ is atomic, then 
\[
\gamma_i = \mu(f_i) \enspace .
\]

\noindent {\bf General case.}
Otherwise, $\gamma_i  = b_i  \max_{\sigma \in f_i} \lambda_i^{\sigma}$, where
\begin{equation}\label{eq:lambda_def}
\lambda_i^{\sigma} = \max_{\tau \in A(i,\sigma)} \left\{\rho_i(\sigma,\tau) \, \frac{\mu(\sigma)}{\mu(\tau)}\right\} \enspace .
\end{equation}

We will discuss several aspects of the definition of $\gamma_i$ in Section~\ref{charging}. The main point is that the notion of charge allows us to state our results \emph{without} having to distinguish between the regenerative and the general case, by simply substituting the appropriate charge. This also serves to highlight that the standard (Probabilistic Method) formulation of the LLL (and its algorithmizations) is, in fact, only a facet of a far more general picture, for which our results provide the first analytical tools. In the regenerative case, since $\gamma_i = \mu(f_i)$, our conditions will parallel those of the LLL (and its algorithmizations). In the general (non-regenerative) case, we will have $\gamma_i \ge \mu(f_i)$ always, but a potentially far sparser causality graph. To gain some first intuition for $\gamma_i$ as a notion of congestion in the general case observe that if $\mu$ is uniform and $D$ is atomic, then $\gamma_i$ is simply the greatest transition probability $\rho_i(\sigma,\tau)$ on any arc originating in $f_i$. In general, it is the ergodic flow from $\sigma$ to $\tau$ divided by the capacity, $\mu(\tau)$, of $\tau$ (and scaled by $b_i$). 

To state our results we need a last definition regarding the distribution $\theta$ of the initial state. 

\begin{definition}
The \emph{span} of a probability distribution $\theta: \Omega \rightarrow [0,1]$, denoted by $\Span(\theta)$, is the set of flaws that may be present in a state selected according to $\theta$, i.e., $ \Span(\theta) = \bigcup_{ \sigma \in \Omega: \theta(\sigma) > 0 } U(\sigma)$.
\end{definition}
 
\subsection{A Simple Markov Chain}
 
Our first result concerns the simplest case where, after choosing an arbitrary permutation $\pi$ of the flaws, the algorithm in each flawed state $\sigma$ simply addresses the greatest flaw present in $\sigma$ according to $\pi$. Observe that substituting $\gamma_i = \mu(f_i)$ for the regenerative case to Theorem~\ref{asymmetric} recovers 
the condition of Theorem~\ref{thm:HV}.
\begin{theorem}\label{asymmetric}
If there exist positive real numbers $\{\psi_i\}$ such that for every $i \in [m]$,
\begin{align}
\zeta_i := 
\frac{\gamma_i}{\psi_i}
 \sum_{ S \subseteq   \Gamma(i) }  \prod_{j \in S} \psi_j   < 1\enspace , \label{eq:mc2}
\end{align}
then for every $\pi$ the walk reaches a sink within $(T_0+s)/\delta$ steps with probability at least $1-2^{-s}$, where $\delta = 1 - \max_{i \in [m]} \zeta_i 
> 0$, and
\[
T_0 	 = 
\log_2 \left( \max_{ \sigma \in \Omega } \frac{\theta(\sigma)  }{ \mu(\sigma) }\right)  + \log_2 \left( \sum_{S \subseteq \Span(\theta) } \prod_{j \in S} \psi_j   \right)\enspace .
\]
\end{theorem}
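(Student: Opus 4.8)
The plan is a witness-forest argument in the spirit of Moser--Tardos, rephrased through the charges $\gamma_i$ so that the regenerative and general cases fall under one computation. Fix $\pi$ and run the walk; if no sink is reached within $t$ steps, let $w_1,\dots,w_t$ be the flaws addressed and distill a \emph{witness forest} $\Phi$ on $t$ nodes: step $j$ is a node labelled $w_j$ whose parent is the latest earlier step responsible for $f_{w_j}$ being present at $\sigma_j$ — the step that last (re)introduced $f_{w_j}$, or, when $w_{j'}=w_j$, the step that last addressed it while it stayed present — and the first step to address a flaw that has been present since $\sigma_1$ (and never absent) is a root. The point of restricting to the $\pi$-greatest rule is precisely that it makes $\Phi$ \emph{feasible}: (i) root labels are distinct and lie in $\Span(\theta)$; (ii) the children of each node carry distinct labels; (iii) each non-root label lies in $\Gamma(\cdot)$ of its parent's label. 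I would establish (i)--(iii) and, crucially, that the greatest-flaw rule turns the run into a deterministic $\pi$-priority traversal of $\Phi$, so that a valid sequence $w_1,\dots,w_t$ is recoverable from $\Phi$ alone; hence $\sum_{(w_j)\ \mathrm{valid}}\prod_j\gamma_{w_j}\le\sum_{\Phi\ \mathrm{feasible},\,|\Phi|=t}\prod_{v\in\Phi}\gamma_{\ell(v)}$. I expect this structural part — feasibility plus the decoding/injectivity — to be the main obstacle, as it is the only place $\pi$ enters, and it hinges on every already-present flaw being $\pi$-smaller than the one addressed, which keeps the present flaws in a stack-like discipline.

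The rest is calculation. For the probability bound, fix $w_1,\dots,w_t$; using $\rho_i(\sigma,\tau)\le\lambda_i^\sigma\,\mu(\tau)/\mu(\sigma)$ and telescoping the $\mu$-ratios, any trajectory $\sigma_1,\dots,\sigma_{t+1}$ realizing it carries weight at most $\frac{\theta(\sigma_1)}{\mu(\sigma_1)}\,\mu(\sigma_{t+1})\prod_j\lambda_{w_j}^{\sigma_j}$. Bounding $\theta(\sigma_1)/\mu(\sigma_1)\le 2^{T_0'}$ where $T_0'=\log_2\max_\sigma\theta(\sigma)/\mu(\sigma)$, then summing out $\sigma_1,\dots,\sigma_t$ in turn — at most $b_{w_j}$ states $\sigma_j$ precede a given $\sigma_{j+1}$ through $f_{w_j}$, and $b_{w_j}\lambda_{w_j}^{\sigma_j}\le\gamma_{w_j}$ by definition — and ending with $\sum_{\sigma_{t+1}}\mu(\sigma_{t+1})=1$, yields $\Pr[\text{the first }t\text{ flaws addressed are }w_1,\dots,w_t]\le 2^{T_0'}\prod_j\gamma_{w_j}$. (In the regenerative case with $\theta=\mu$ one gets the same bound with $2^{T_0'}=1$ and $\gamma_i=\mu(f_i)$, via the resampling-oracle validation identity instead of the $\lambda$-telescope.)

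It remains to sum over feasible forests of size $t$. Rewriting~\eqref{eq:mc2} as $\gamma_i\le(1-\delta)\psi_i\prod_{j\in\Gamma(i)}(1+\psi_j)^{-1}$ and substituting into $\prod_{v\in\Phi}\gamma_{\ell(v)}$, pair each non-root node's factor $\psi_{\ell(v)}$ with the factor $(1+\psi_{\ell(v)})^{-1}$ supplied by its parent — legitimate because a node's children are distinctly labelled — so that it becomes $x_{\ell(v)}:=\psi_{\ell(v)}/(1+\psi_{\ell(v)})<1$, leave each root a bare $\psi$, and retain every unpaired $(1+\psi_j)^{-1}\le1$. Relaxing the sum over size-$t$ forests to the sum over all feasible forests and evaluating it through the per-subtree generating function $\tilde Z(i)=\prod_{j\in\Gamma(i)}\bigl((1+\psi_j)^{-1}+x_j\tilde Z(j)\bigr)$, the identity $(1+\psi_j)^{-1}+x_j=1$ gives $\tilde Z(i)\le1$ by induction on depth, whence $\sum_{\Phi\ \mathrm{feasible},\,|\Phi|=t}\prod_v\gamma_{\ell(v)}\le(1-\delta)^t\sum_{S\subseteq\Span(\theta)}\prod_{j\in S}\psi_j$. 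Chaining the three steps, $\Pr[\text{length}\ge t]\le 2^{T_0'}(1-\delta)^t\sum_{S\subseteq\Span(\theta)}\prod_{j\in S}\psi_j=2^{T_0}(1-\delta)^t<2^{\,T_0-\delta t}$ since $1-\delta<2^{-\delta}$; putting $t=(T_0+s)/\delta$ gives the asserted $1-2^{-s}$, with $\delta>0$ immediate from~\eqref{eq:mc2}.
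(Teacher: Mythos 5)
Your proposal is correct and takes essentially the same route as the paper: your factorized bound $\Pr[W_t=W]\le 2^{T_0'}\prod_j\gamma_{w_j}$ via the $\lambda$-telescope and the backward count of at most $b_{w_j}$ predecessors is exactly Lemma~\ref{lemma:regeneration} (Section~\ref{sec:harmonic}), your witness forest with distinct roots in $\Span(\theta)$, distinctly-labelled children lying in $\Gamma$ of the parent's label, and recovery of $w_1,\dots,w_t$ by a $\pi$-priority traversal is precisely the paper's Break Forest construction (Appendix~\ref{TheProof}), and your $\psi\mapsto x=\psi/(1+\psi)$ pairing with the identity $(1+\psi_j)^{-1}+x_j=1$ is the same computation the paper carries out via the rejection-sampling branching process in the proof of Theorem~\ref{lem:master}. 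The only difference is presentational: the paper routes the forest count through the general traceability framework of Theorem~\ref{lem:master}, while you perform the weighted count directly with a subtree generating function.
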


Theorem~\ref{asymmetric} has two features worth discussing, both directs consequences of the generality of our framework, i.e., of abandoning the Probabilistic Method viewpoint and measure regeneration.\medskip

\noindent {\bf Arbitrary initial state.} Since $\theta$ can be arbitrary in the general case, any foothold on $\Omega$ suffices to apply the theorem, without needing to sample from $\Omega$ according to some measure. This can also be interesting when we can not sample from $\mu$, but can regenerate $\mu$ at every flaw on an atomic $D$, i.e., the second case of the regenerative setting. Note also that $T_0$ captures the trade-off between the fact that when $\theta = \mu$ the first term in $T_0$ vanishes, but the second term grows to reflect the uncertainty of the set of flaws present in $\sigma_1$.\smallskip

\noindent {\bf Arbitrary number of flaws.} The running time depends only on the span $|\Span(\theta)|$, not the total number of flaws $|F|$. This has an implication  analogous to the result of Hauepler, Saha, and Srinivasan~\cite{haeupler_lll} on core events: even when $|F|$ is very large, e.g., super-polynomial in the problem's encoding length, we can still get an efficient algorithm if, for example,  we can find a state $\sigma_1$ such that $|U(\sigma_1)|$ is small, e.g., by proving that in every state only polynomially many flaws may be present, or $\theta$ such that $|\Span(\theta)|$ is small.\medskip

\subsection{A Non-Markovian Algorithm}

Our next results concerns the common setting where the subgraph induced by the neighborhood of each flaw in the causality graph  contains several arcs.  We improve Theorem~\ref{asymmetric} in such settings by employing a \emph{recursive} algorithm. The flaw addressed in each step thus depends on the \emph{entire} trajectory up that point not just the current state, i.e., the walk is non-Markovian. It is for this reason that we required a non-empty set of actions for every flaw present in a state, and why the definition of the causality digraph does not involve flaw choice. The improvement is that rather than summing over all subsets of $\Gamma(i)$ as in~\eqref{eq:mc2}, we now only sum over \emph{independent} such subsets, where $f_i,f_j$ are dependent if $f_i \rightarrow f_j$ and $f_j \rightarrow f_i$. This improvement is similar to the cluster expansion improvement of Bissacot et al.~\cite{bissacot2011improvement} of the general LLL. As a matter of fact, Theorem~\ref{olala} implies the algorithmic aspects of~\cite{bissacot2011improvement} (see~\cite{PegdenIndepen} and ~\cite{HV}). 

The use of a recursive algorithm affords an additional advantage, as it enables ``responsibility shifting" between flaws. Specifically, for a fixed action digraph $D$ with causality digraph $C$, the recursive algorithm (and Theorem~\ref{olala}), take as input \emph{any} digraph $R \supseteq C$, i.e., allow for arcs to be added to the causality digraph. The reason for this as follows. While adding, say, arcs $f_i\to f_j$ and $f_j \to f_i$ may make the sums corresponding to $f_i$ and $f_j$ greater, if $f_k$ is such that $\{f_i,f_j\} \subseteq \Gamma(k)$, then its sum may become smaller, as $f_i,f_j$ are now dependent. As a result, such arc addition may enable a sufficient condition for rapid convergence to a perfect object, e.g., in our application on Acyclic Edge Coloring in Section~\ref{AECARA}. An analogous counter-intuitive phenomenon is also true in the improvement of Bissacot et al.~\cite{bissacot2011improvement} where denser dependency graphs may result to a better analysis.

Below, for $S \subseteq F$, we let $I_{\pi}(S)$ denote the greatest element of $S$ according to $\pi$. For any fixed ordering $\pi$ of $F$ the recursive walk is the non-Markovian random walk on $\Omega$ that occurs by invoking procedure {\sc Eliminate}. Observe that if in line~\ref{a_key_diff} we do not intersect $U(\sigma)$ with $\Gamma_R(f_i)$ the recursion is trivialized and we recover the simple walk of Theorem~\ref{asymmetric}.

\begin{algorithm}[h]\caption*{{\bf Recursive Walk}}
\begin{algorithmic}[1]\label{Recursive}
\Procedure{Eliminate}{}
\State $\sigma \leftarrow \theta(\cdot)$ \Comment{Sample $\sigma$ from $\theta$}
\While {$U(\sigma) \neq \emptyset$}	
	\State {\sc Address} ($I_{\pi}(U(\sigma)),\sigma$) 
\EndWhile	
\State	\Return $\sigma$ 
\EndProcedure{}{}
\Procedure{Address}{$i,\sigma$}
\State $\sigma \leftarrow$ $\tau \in A(i,\sigma)$ with probability $\rho_i(\sigma,\tau)$ 
\While {$B = U(\sigma) \cap \Gamma_{R}(f_i) \neq \emptyset$}   \label{a_key_diff}	\Comment{Note $\,\cap \Gamma_{R}(f_i)$} \label{code:critical}
	\State {\sc{Address}}($I_{\pi}(B),\sigma$) 				\label{code:act}
\EndWhile  
\EndProcedure  
\end{algorithmic}
\end{algorithm}

\begin{definition}\label{defn:G}
Given a digraph $R$ on $F$ let $G=G(R)=(F,E)$ be the \emph{undirected} graph where $\{f,g\} \in E$ iff both $f \rightarrow g$ and $g \rightarrow f$ exist in $R$. For $S \subseteq F$, let $\Indep(S) = \{S' \subseteq S : \text{$S'$ is an independent set in $G$}\}$.
\end{definition}

\begin{theorem}\label{olala}
Let $R \supseteq C$ be arbitrary. If there exist positive real numbers $\{\psi_i\}$ such that for every $i \in [m]$,
\begin{align}
\zeta_i := \frac{\gamma_i}{\psi_i} \sum_{S \in \Indep(\Gamma_{R}(i))}  \prod_{j \in S} \psi_j < 1 \enspace , \label{eq:mc3}
\end{align}
then for every $\pi$ the recursive walk reaches a sink within $(T_0+s)/\delta$ steps with probability at least $1-2^{-s}$, where $\delta = 1 - \max_{i \in [m]} \zeta_i > 0$, and
\[
T_0 	 = 
\log_2 \left( \max_{ \sigma \in \Omega } \frac{\theta(\sigma)  }{ \mu(\sigma) }\right) + \log_2 \left( \sum_{S \subseteq \Indep \left( \Span(\theta) \right)  } \prod_{j \in S} \psi_j   \right)
 \enspace  .
\]
\end{theorem}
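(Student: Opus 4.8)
The plan is to analyze the recursive walk via a witness-tree / "stack trace" style argument, adapted to the charge framework, in the spirit of Moser--Tardos and of the earlier analysis in~\cite{AI}. First I would record the combinatorial structure of an execution: running {\sc Eliminate} produces a sequence of \emph{root} calls to {\sc Address}, and each root call spawns a recursion tree whose nodes are labeled by flaws. The key structural facts to isolate are (i) whenever {\sc Address}$(i,\sigma)$ recursively calls {\sc Address}$(j,\cdot)$ at line~\ref{code:act}, we have $j \in \Gamma_R(i)$, so every parent--child pair in a recursion tree is an arc of $R$; and (ii) the set of flaws $B$ on which {\sc Address}$(i,\cdot)$ recurses at any given moment is an \emph{independent set} in $G(R)$ among those addressed "simultaneously" — more precisely, the children of a node, restricted to any antichain of the while-loop, together with the $\pi$-maximality rule, force the labels appearing at a node's children to range over $\Indep(\Gamma_R(i))$ when we organize the witness correctly. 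This is exactly what will let the sum in~\eqref{eq:mc3} run over $\Indep(\Gamma_R(i))$ rather than all subsets, mirroring the cluster-expansion refinement of Bissacot et al.

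Next I would set up the probabilistic accounting. For a fixed witness forest $W$ (a list of recursion trees, node-labeled by flaws), I would bound the probability that the execution "contains" $W$ by a product over the nodes of $W$ of the charges $\gamma_i$, using the definition $\gamma_i = b_i\max_{\sigma\in f_i}\lambda_i^\sigma$ together with $\lambda_i^\sigma = \max_{\tau}\rho_i(\sigma,\tau)\mu(\sigma)/\mu(\tau)$: the factor $\mu(\sigma)/\mu(\tau)$ telescopes along the trajectory so that consecutive states' measures cancel, leaving a boundary term $\mu(\sigma_{\mathrm{start}})/\mu(\sigma_{\mathrm{end}})$, and the $b_i$ factor absorbs the ambiguity in reconstructing which $\sigma\in f_i$ a given action came from (it is exactly $1$ in the atomic case). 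In the regenerative case with $\gamma_i=\mu(f_i)$ one instead invokes the resampling-oracle identity~\eqref{eq:oracle} directly, as in Theorem~\ref{thm:HV}; the $\theta\neq\mu$ (atomic) subcase is handled by Theorems~\ref{atomic_oracles}--\ref{tight} which characterize trajectory probabilities by the flaws addressed and contribute the $\log_2\max_\sigma\theta(\sigma)/\mu(\sigma)$ boundary term. Either way the per-witness bound is $\big(\prod_{v\in W}\gamma_{\ell(v)}\big)\cdot(\text{boundary factor})$.

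Then comes the counting step: I would introduce the weights $\psi_i$ and show, by a top-down generating-function induction over recursion trees, that $\sum_{W}\prod_{v\in W}\gamma_{\ell(v)} \le$ (a boundary term depending on $\Span(\theta)$) $\times \prod_i(\text{something})$, where the induction hypothesis at a node labeled $i$ is that the sub-forest rooted there has weighted sum at most $\psi_i$ — this works precisely because the children's labels form a set in $\Indep(\Gamma_R(i))$ and $\gamma_i\sum_{S\in\Indep(\Gamma_R(i))}\prod_{j\in S}\psi_j \le \psi_i\zeta_i < \psi_i$ by~\eqref{eq:mc3}. The "extra" slack $\zeta_i \le 1-\delta$ at every node gives a factor $(1-\delta)^{(\text{total \# steps})}$; combined with a union bound over all witnesses of total size exceeding $T_0+s$, and the identification of the boundary/root contribution with $\log_2(\sum_{S\in\Indep(\Span(\theta))}\prod_j\psi_j)$ plus $\log_2\max_\sigma\theta(\sigma)/\mu(\sigma)$, this yields that the walk runs for more than $(T_0+s)/\delta$ steps with probability at most $2^{-s}$.

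The main obstacle I expect is the structural lemma establishing that the children of each recursion-tree node can be organized so their labels genuinely range only over independent sets of $G(R)$ — the while-loop at line~\ref{code:critical} addresses flaws one at a time and re-evaluates $U(\sigma)$, so a naive reading gives an ordered \emph{sequence} of children whose labels need not be pairwise non-adjacent. Making this precise requires the right notion of witness (likely recording only the flaws present "fresh" at the moment of each recursive call, and using that if $f_j,f_k$ are both addressed as children of the same node with $f_j\to f_k$ and $f_k\to f_j$ then one of them, by $\pi$-maximality and the causality definition, would not have been selected when it was) — this is the delicate combinatorial heart, analogous to why the cluster-expansion LLL sums over independent sets, and every other step is a by-now-standard adaptation of the witness-tree method once this is in place.
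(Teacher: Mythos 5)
Your overall architecture coincides with the paper's: witness each execution by its recursion forest (one root per call of {\sc Address} by {\sc Eliminate}, one child per recursive call, nodes labeled by the flaw-argument), bound the probability of any fixed witness sequence by $\xi\prod_i \gamma_{[i]}$ with $\xi=\max_\sigma\theta(\sigma)/\mu(\sigma)$ (this is Lemma~\ref{lemma:regeneration}, proved by exactly the telescoping-of-$\mu$ and $b_i$-reconstruction argument you sketch), and then do a $\psi$-weighted enumeration of forests whose sibling labels range over $\Indep(\Gamma_R(\cdot))$ and whose root labels range over $\Indep(\Span(\theta))$; your generating-function induction is an unrolled version of the Pegden-style branching process the paper uses and yields the same $(\max_i\zeta_i)^t\sum_{S}\prod_{j\in S}\psi_j$ bound, hence the same $T_0$ and $\delta$.

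The genuine gap is precisely the step you flag as the ``delicate combinatorial heart'', and the mechanism you propose for it would not work. Independence of sibling (and root) labels has nothing to do with $\pi$-maximality or with recording only ``fresh'' flaws: the $\pi$-rule is only needed to make the witness map injective, i.e., to recover the ordered sequence of addressed flaws from the unordered forest (traceability). Two flaws $f_j,f_k$ with $f_j\rightleftarrows f_k$ in $R$ could each be the $\pi$-greatest element of $U(\cdot)\cap\Gamma_R(f_i)$ at the two different moments they are selected inside the same while loop, so $\pi$-maximality excludes nothing. What does the work is the recursion guard at line~\ref{a_key_diff}, isolated in the paper as Proposition~\ref{prop:rid}: if {\sc Address}$(i,\sigma)$ returns at $\tau$, then $U(\tau)\subseteq U(\sigma)\setminus(\Gamma_R(f_i)\cup\{f_i\})$, because the call cannot return while a flaw of $U(\sigma)\cap\Gamma_R(f_i)$ is present, and any flaw of $\Gamma_R(f_i)$ introduced later must be cleared by the last deeper invocation that introduced it before that invocation returns. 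Applying this to the consecutive children {\sc Address}$(a_1,\tau^1),\ldots,${\sc Address}$(a_q,\tau^q)$ of a node shows the sets $U(\tau^j)$ are nested and exclude $\Gamma_R(a_i)\cup\{a_i\}$ for all $j>i$, so no later sibling is labeled by a flaw of $\Gamma_R(a_i)$ --- in fact one gets the stronger ``no forward arc in $R$'' property, and in particular independence in $G(R)$; the same monotonicity applied to the top-level invocations shows the root labels form an independent subset of $\Span(\theta)$, which your $T_0$ term also needs and which you did not argue. Without this proposition both the restriction to $\Indep(\Gamma_R(i))$ in~\eqref{eq:mc3} and the $\Indep(\Span(\theta))$ term in $T_0$ are unjustified; with it, the rest of your argument goes through essentially as in the paper.
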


\begin{remark}
Theorem~\ref{olala} strictly improves Theorem~\ref{asymmetric} since by taking $R=C$ (i) the summation in~\eqref{eq:mc3} is only over the subsets of $\Gamma_R(f)$ that are independent in $G$, instead of  all subsets of $\Gamma_R(f)$ as in~\eqref{eq:mc2}, and (ii) similarly for $T_0$, the summation is only over the independent subsets of $\Span(\theta)$, rather than all subsets of $\Span(\theta)$.
\end{remark}

\begin{remark}
Theorem~\ref{olala} can be strengthened by introducing for each flaw $f \in F$ a permutation $\pi_f$ of $\Gamma_{R(f)}$ and replacing $\pi$ with $\pi_f$  in line~\ref{code:act} the of Recursive Walk. With this change in~\eqref{eq:mc3} it suffices to sum only over $S \subseteq \Gamma_{R}(f)$ satisfying the following: if the subgraph of $R$ induced by $S$ contains an arc $g \to h$, then $\pi_{f}(g) \ge \pi_f(h)$. As such a subgraph can not contain both $g \to h$ and $h \to g$ we see that $S \in \Indep(\Gamma_R(f))$.
\end{remark} 

\subsection{A General Theorem}

Theorems~\ref{asymmetric} and~\ref{olala} are instantiations of a general theorem we develop for establishing the success of focused local search algorithms by local considerations. Before presenting the theorem itself, we first briefly discuss its derivation, as that helps motivate and digest the theorem's form.

To bound the probability of not reaching a sink within $t$ steps we partition the set of all $t$-trajectories into equivalence classes, bound the total probability of each class, and sum the bounds for the different classes. The partition is according to the $t$-sequence of flaws addressed, which acts as a \emph{statistic} of the state distribution. Formally, for a trajectory $\Sigma = \sigma_1 \xrightarrow{w_1}   \sigma_2 \xrightarrow{w_2}  \cdots$ we let $W(\Sigma) = w_1,w_2\cdots$ denote its \emph{witness} sequence, i.e., the sequence of flaws addressed along $\Sigma$. We let $W_t(\Sigma) = \perp$ if $\Sigma$ has fewer than $t$ steps, otherwise we let $W_t(\Sigma)$ be the $t$-prefix of $W(\Sigma)$. Slightly abusing notation we let $W_t = W_t(\Sigma)$ be the random variable when $\Sigma$ is the trajectory of the walk, i.e., selected according to $(D,\rho,\theta)$ and the flaw-choice mechanism. If $\mathcal{W}_{t} = \mathcal{W}_{t}(\mathcal{A})$ denotes the range of $W_t$ for an algorithm $\mathcal{A}$, then the probability that $\mathcal{A}$ takes $t$ or more steps, trivially, is $\sum_{W \in \mathcal{W}_t}  \Pr[W_t = W]$.
 
Key to our analysis is deriving upper bounds for $\Pr[W_t =W]$ that factorize over the elements of $W$. Specifically, for an arbitrary sequence of flaws $A = a_1,\ldots,a_t$, let us denote by $[i]$ the index $j \in [m]$ such that $a_i = f_j$. Lemma~\ref{lemma:regeneration} holds for both the regenerative and the general case, with the corresponding $\gamma_i$. Moreover, we will see that it can be tight, up to the prefactor $\xi$.
\begin{lemma}\label{lemma:regeneration}
Let $\xi = \xi(\theta,\mu) = \max_{\sigma \in \Omega} \{\theta(\sigma)/\mu(\sigma)\}$. For every sequence of flaws $A = a_1,\ldots,a_t$,
\[
\Pr[W_t = A] \le \xi \prod_{i = 1}^{t} \gamma_{[i]} \enspace .
\]
\end{lemma}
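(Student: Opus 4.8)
The plan is to bound $\Pr[W_t = A]$ by summing over all trajectories $\Sigma$ with $W_t(\Sigma) = A$, decomposing the probability of each such trajectory into the product of the initial-state probability $\theta(\sigma_1)$ and the one-step transition probabilities $\rho_{[i]}(\sigma_i,\sigma_{i+1})$ along the way. (The flaw-choice mechanism is deterministic given the trajectory, so it contributes no extra factors; this is why we may index the transitions by the fixed sequence $A$.) The idea is then to re-weight each trajectory by switching from $\theta$ to $\mu$ at the start and telescoping the measure ratios along the path. Concretely, I would write
\[
\theta(\sigma_1)\prod_{i=1}^{t}\rho_{[i]}(\sigma_i,\sigma_{i+1})
= \frac{\theta(\sigma_1)}{\mu(\sigma_1)}\cdot \mu(\sigma_{t+1}) \cdot \prod_{i=1}^{t}\left(\rho_{[i]}(\sigma_i,\sigma_{i+1})\,\frac{\mu(\sigma_i)}{\mu(\sigma_{i+1})}\right),
\]
so that the first factor is at most $\xi$, and each bracketed factor is, by definition~\eqref{eq:lambda_def}, at most $\lambda_{[i]}^{\sigma_i} \le \max_{\sigma \in f_{[i]}}\lambda_{[i]}^{\sigma}$.

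The remaining task is to sum $\mu(\sigma_{t+1})\prod_i \lambda_{[i]}^{\sigma_i}$ over all trajectories consistent with $A$, and to show this is at most $\prod_{i=1}^t b_{[i]}$. Here I would peel off the transitions from the end: fixing $\sigma_1,\dots,\sigma_t$, the number of choices of $\sigma_{t+1}$ with an arc $\sigma_t \xrightarrow{[t]} \sigma_{t+1}$ consistent with the walk, weighted by $\mu(\sigma_{t+1})$, must be controlled by $b_{[t]}$ — this is exactly where the quantity $b_i^\tau = |\{\sigma \in f_i : \tau \in A(i,\sigma)\}|$ enters. More precisely, I would reverse the order of summation: sum over $\sigma_{t+1}$ first, and for each fixed $\sigma_{t+1}$ count the number of $\sigma_t \in f_{[t]}$ that could have led to it, which is at most $b_{[t]}^{\sigma_{t+1}} \le b_{[t]}$; since $\sum_{\sigma_{t+1}}\mu(\sigma_{t+1}) \le 1$, the last transition contributes a factor $b_{[t]}$. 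Iterating this argument from $i = t$ down to $i = 1$ (each time absorbing the leftover $\mu(\sigma_i)$ mass into the telescoped product of the previous step, and using $\sum_{\sigma}\mu(\sigma)\le 1$ at the base) yields $\prod_{i=1}^t b_{[i]}$, and combining with the $\xi$ and $\max_\sigma \lambda$ bounds gives $\Pr[W_t=A] \le \xi \prod_{i=1}^t b_{[i]}\max_{\sigma\in f_{[i]}}\lambda_{[i]}^\sigma = \xi\prod_{i=1}^t \gamma_{[i]}$, as claimed.

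For the regenerative case (where $\gamma_i = \mu(f_i)$), a cleaner argument is available and I would treat it separately. Using regeneration~\eqref{eq:oracle}, one shows by induction on $t$ that the distribution of $\sigma_{t+1}$ conditioned on $W_t = A$ being a prefix is dominated appropriately; the key identity is that $\sum_{\sigma \in f_i}\mu(\sigma)\rho_i(\sigma,\tau) = \mu(f_i)\mu(\tau)$, so that addressing $f_i$ when the current state is distributed as $\mu$ leaves the state distributed as $\mu$ again, at the cost of a multiplicative $\mu(f_i)$ for the event that $f_i$ was indeed present. Starting from $\theta = \mu$ (or using atomicity to handle $\theta \neq \mu$ via the bijective structure of actions into any fixed $\tau$), this telescopes to $\prod_{i=1}^t \mu(f_{[i]})$, with the $\xi$ prefactor absorbing the $\theta$-versus-$\mu$ discrepancy in the atomic subcase.

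The main obstacle I anticipate is the bookkeeping in the iterated summation for the general case: one must be careful that "consistent with $A$" is used correctly — the count $b_{[i]}^{\sigma_{i+1}}$ bounds the number of predecessors $\sigma_i \in f_{[i]}$, but we also need $f_{[i]}$ to actually be the flaw the mechanism addresses at $\sigma_i$, which only restricts the sum further and hence is safe. Getting the direction of the inequalities right (we are upper-bounding, so dropping the flaw-choice constraint and replacing $b_i^\tau$ by $b_i$ both go the right way), and making sure the telescoping of $\mu(\sigma_i)/\mu(\sigma_{i+1})$ factors interacts correctly with peeling transitions off the end rather than the front, is the delicate part. The atomic case is the sanity check: there $b_i = 1$, the predecessor is unique, and the sum over trajectories collapses to a sum over final states, immediately giving $\xi\prod_i \max_\sigma \lambda_{[i]}^\sigma$.
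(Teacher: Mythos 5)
Your proposal is correct and takes essentially the same route as the paper: your telescoping of the ratios $\rho_{[i]}(\sigma_i,\sigma_{i+1})\,\mu(\sigma_i)/\mu(\sigma_{i+1}) \le \lambda_{[i]}^{\sigma_i}$ combined with the backward count of at most $b_{[i]}^{\sigma_{i+1}} \le b_{[i]}$ predecessors per step is precisely the weighted-tree construction and collapse of Section~\ref{sec:harmonic}, and your regenerative-case induction based on $\sum_{\sigma \in f_i}\mu(\sigma)\rho_i(\sigma,\tau) = \mu(f_i)\mu(\tau)$ (with atomicity handling $\theta \neq \mu$) matches Lemma~\ref{lem:prod_form} and Section~\ref{sec:misc_proofs}.
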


The product form of the bound in Lemma~\ref{lemma:regeneration} allows us to combine it with different collections of forests, each collection expressing an upper bound for (superset of) the set of all possible witness sequences $\mathcal{W}_t$. The formulation of the supersets as forests enables the combinatorial enumeration of their elements (flaw sequences) which, combined with Lemma~\ref{lemma:regeneration}, yields Theorem~\ref{lem:master} below. While $\mathcal{W}_t$ depends on flaw-choice, the main and common feature of all bounds (forests) is the enforcement of the following idea: while the very first occurrence of a flaw $f_j$ in a witness sequence $W$  may be attributed to $f_j \ni \sigma_1$, every subsequent occurrence of $f_j$ must be preceded by a distinct earlier occurrence of a flaw $f_i$ that can  ``assume responsibility" for $f_j$, e.g., a flaw $f_i$ that potentially causes $f_j$. In this way, the set $\mathcal{W}_t$ is bounded \emph{syntactically} by differently sophisticated considerations of flaw-choice and responsibility. Specifically, Definition~\ref{def:traceable} below (i) imposes a modicum of control over flaw-choice, while (ii) generalizing the subsets of flaws for which a flaw $f$ may be responsible from subsets of $\Gamma(f)$ to arbitrary subsets of flaws, thus enabling responsibility shifting.

\begin{definition}\label{def:traceable}
Given $(D,\rho,\theta)$, a flaw-choice mechanism is \emph{traceable} if there exist sets $\mathrm{Roots}(\theta) \subseteq 2^{F}$ and $\mathrm{List}(f_1) \subseteq 2^{F}, \ldots, \mathrm{List}(f_m) \subseteq 2^{F}$ such that for every $t \ge 1$,  the set of all possible witness sequences $\mathcal{W}_t$ can be injected into unordered rooted forests with $t$ vertices that have the following properties:
\begin{enumerate}
\item 
Each vertex of the forest is labeled by a flaw $f_i\in F$.\label{cond:label}
\item 
The flaws labeling the roots of the forest are distinct and form an element of $\mathrm{Roots}(\theta)$.\label{cond:roots}
\item 
The flaws labeling the children of each vertex are distinct.\label{cond:distinct}
\item
If a vertex is labelled by flaw $f_i$, then the labels of its children form an element of $\mathrm{List}(f_i)$.
\label{cond:list}
\end{enumerate}
\end{definition}

To recover the witness sequence from a forest, thus demonstrating the injection of $\mathcal{W}_t$, we make use of the specificity of the mechanism for selecting which flaw to address at each step. For example, the forests that correspond to the algorithm of Theorem~\ref{olala} are ``recursion forests", having one node for each recursive call of \address, labelled by the flaw that is the call's argument. To recover the sequence of addressed flaws, we order the trees in the forest and the progeny of each vertex using knowledge of $\pi$ and then traverse each tree in the recursive forest in postorder. We explain why the algorithms of Theorems~\ref{asymmetric} and~\ref{olala} are traceable in Appendix~\ref{sec:forests}, where we describe the set of witness forests that correspond to each theorem.

Theorem~\ref{lem:master} below implies both Theorem~\ref{asymmetric} and Theorem~\ref{olala}. While those two theorems do not care about the flaw ordering $\pi$,  Theorem~\ref{lem:master} also captures the  ``LeftHanded Random Walk" result of~\cite{AI} (motivated by the LeftHanded version of the LLL introduced by Pedgen~\cite{PegdenLLLL}), under which the flaw order $\pi$ can be chosen in a \emph{provably} beneficial way, yielding a ``responsibility" digraph. That is, both in the regenerative case and in the general case, one can use our $\gamma_i$ as the charge for each event in the responsibility digraph of~\cite{AI}.

\begin{theorem}[Main result]\label{lem:master}
If $\mathcal{A}$ results by applying a traceable flaw-choice mechanism on $(D,\rho,\theta)$ and there exist positive real numbers $\{\param_i\}$ such that for every flaw $f_i \in F$,
\begin{align}
\zeta_i := \frac{\gamma_i}{\param_i } \sum_{S \in \List(f_i)}  \prod_{j \in S} \param_j < 1 \enspace \label{genikoteri} , 
\end{align}
then a sink is reached within $(T_0+s)/\delta$ steps with probability at least $1-2^{-s}$, where $\displaystyle{\delta = 1 - \max_{i \in [m]} \zeta_i}$ and
\[
 T_0 		 = \log_2 \left( \max_{ \sigma \in \Omega } \frac{\theta(\sigma)  }{ \mu(\sigma) }\right) + \log_2 \left( \sum_{S \in \mathrm{Roots}(\theta) } \prod_{j \in S} \psi_j   \right)\enspace .
\]
\end{theorem}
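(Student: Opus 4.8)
The plan is to bound the probability that the walk runs for at least $t$ steps by summing, over all witness sequences $W \in \mathcal{W}_t$, the probability $\Pr[W_t = W]$, and then show this sum decays geometrically in $t$. The two ingredients are already in place: Lemma~\ref{lemma:regeneration} gives the factorized bound $\Pr[W_t = W] \le \xi \prod_{i=1}^t \gamma_{[i]}$, and traceability (Definition~\ref{def:traceable}) lets us inject $\mathcal{W}_t$ into the set of labeled rooted forests on $t$ vertices whose roots form an element of $\mathrm{Roots}(\theta)$ and whose children-label-sets at an $f_i$-vertex form an element of $\mathrm{List}(f_i)$. So the first step is to write
\[
\Pr[\text{walk runs } \ge t \text{ steps}] \;=\; \sum_{W \in \mathcal{W}_t} \Pr[W_t = W] \;\le\; \xi \sum_{\phi \in \mathcal{F}_t} \prod_{v \in \phi} \gamma_{\ell(v)} \enspace ,
\]
where $\mathcal{F}_t$ is the set of admissible $t$-vertex forests and $\ell(v)$ the label of vertex $v$; here we use that the injection lets us replace the sum over witness sequences by a sum over forests, with the product of $\gamma$'s along a witness sequence equal to the product of $\gamma$'s over the forest's vertices.

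The core of the argument is then a generating-function / branching-process estimate of $\sum_{\phi \in \mathcal{F}_t} \prod_v \gamma_{\ell(v)}$. Introduce the per-flaw quantities $\psi_i$ from the hypothesis. The standard device is to charge each vertex labeled $f_i$ a factor $\gamma_i$ and redistribute it: assign to a vertex labeled $f_i$ with parent labeled $f_j$ the weight $\gamma_i \psi_i / \psi_i = \gamma_i$, but organize the sum over children-sets so that the subtree rooted at an $f_i$-vertex contributes at most $\psi_i$ times the factor $\zeta_i < 1$ raised to the number of vertices in that subtree. Concretely, define for each flaw $f_i$ the "subtree weight" generating quantity and show by induction on tree size (peeling off the root) that the total weight of all trees rooted at a vertex labeled $f_i$ is at most $\psi_i$ — this is exactly where condition~\eqref{genikoteri}, i.e. $\frac{\gamma_i}{\psi_i}\sum_{S \in \List(f_i)} \prod_{j \in S}\psi_j < 1$, is used: it says $\gamma_i \sum_{S \in \List(f_i)}\prod_{j\in S}\psi_j \le \zeta_i \psi_i$, and since children-label-sets are distinct flaws forming an element of $\mathrm{List}(f_i)$, the sum over all children configurations of the product of subtree weights is at most $\sum_{S \in \List(f_i)}\prod_{j\in S}\psi_j$. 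To get the geometric decay in $t$ we insert a bookkeeping variable: track the total number of vertices, so the subtree-weight bound becomes $\psi_i$ times (something)$^{(\text{size})}$ with that something $\le \max_i \zeta_i = 1-\delta$. Summing over root-sets $S \in \mathrm{Roots}(\theta)$ then yields
\[
\Pr[\text{walk runs } \ge t \text{ steps}] \;\le\; \xi \left(\sum_{S \in \mathrm{Roots}(\theta)} \prod_{j \in S}\psi_j\right) (1-\delta)^t \;=\; 2^{T_0}(1-\delta)^t \enspace ,
\]
using $\log_2 \xi = \log_2 \max_\sigma \theta(\sigma)/\mu(\sigma)$ and the definition of $T_0$.

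Finally, setting $t = (T_0 + s)/\delta$ and using $(1-\delta)^t \le e^{-\delta t} = e^{-(T_0+s)} \le 2^{-(T_0+s)}$ gives the bound $2^{T_0} \cdot 2^{-(T_0+s)} = 2^{-s}$, which is the claim (one should double check whether the paper's convention makes the clean inequality $(1-\delta)^{1/\delta}\le 1/2$ or $\le 1/e$ the relevant one, and absorb the constant accordingly — this is the kind of routine constant-chasing I would not belabor). The main obstacle I anticipate is the inductive subtree-weight estimate: one must be careful that the injection from witness sequences to forests really does make the $\gamma$-product over a witness sequence equal the $\gamma$-product over forest vertices (so that no flaw occurrence is double-counted or dropped), and that the "distinct children labels" condition is genuinely what licenses replacing a sum over ordered child-sequences by a sum over the sets in $\mathrm{List}(f_i)$ — in particular the forests are \emph{unordered}, so the combinatorial factor for orderings must be handled by the injection's definition rather than reintroduced here. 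Everything else (Lemma~\ref{lemma:regeneration}, traceability) is quoted as given.
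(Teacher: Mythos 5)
Your proposal is correct and follows the same skeleton as the paper's proof: bound $\Pr[W_t=W]$ via Lemma~\ref{lemma:regeneration}, use traceability to inject $\mathcal{W}_t$ into labeled rooted forests, and then bound the $\gamma$-weighted count of $t$-vertex forests by extracting a factor $(\max_i \zeta_i)^t$ and a $\psi$-sum over $\mathrm{Roots}(\theta)$, finishing with the same $t=(T_0+s)/\delta$ calculation (the paper uses $\log_2(1/\zeta_i)\ge 1-\zeta_i$, you use $(1-\delta)^t\le \mathrm{e}^{-\delta t}$; both are fine). The one genuine difference is how the forest count is carried out. The paper, following Pegden, normalizes the weights into a rejection-sampling branching process with $x_i=\psi_i/(1+\psi_i)$ (Lemma~\ref{branchingLemma}), so that $\prod_{v}\gamma_{\ell(v)}$ is rewritten exactly as $(\prod_v\zeta_{\ell(v)})\cdot p_\phi\cdot\sum_{S\in\mathrm{Roots}(\theta)}\prod_{j\in S}\psi_j$ and the size-$t$ constraint is handled for free by $\sum_\phi p_\phi\le 1$. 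Your direct inductive subtree-weight estimate buys the same bound more elementarily, but be aware of the execution point you yourself flagged: the induction cannot be run on trees of an \emph{exact} size $k$ (peeling the root produces a sum over compositions of $k-1$ among the children's subtrees, and that count breaks the naive step $T_i(k)\le\psi_i\zeta^k$). The fix is precisely your ``bookkeeping variable'', made concrete as a discounted cumulative bound: prove by induction that $\sum_{k'\le k}\zeta^{-k'}T_i(k')\le\psi_i$ with $\zeta=\max_i\zeta_i$, since expanding $\prod_{j\in S}\bigl(\sum_{k_j}\zeta^{-k_j}T_j(k_j)\bigr)$ absorbs all compositions and condition~\eqref{genikoteri} closes the step; then $T_i(t)\le\psi_i\zeta^{t}$ follows and the rest of your argument goes through. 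Your closing caveats (that the injection preserves the multiset of flaw labels, and that unordered forests carry no spurious ordering factors) are exactly what the explicit Break/Recursive forest constructions in Appendix~\ref{sec:forests} are there to guarantee, so no gap remains.
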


\subsubsection{Proofs of Theorems~\ref{asymmetric} and~\ref{olala} from Theorem~\ref{lem:master}}

In Appendix~\ref{sec:forests}, we describe the \emph{Break Forests} and \emph{Recursive Forests}, into which we inject the witness sequences of the algorithms of Theorems~\ref{asymmetric} and~\ref{olala}, respectively. Theorem~\ref{asymmetric} follows from Theorem~\ref{lem:master} as Break Forests  satisfy the conditions of Theorem~\ref{lem:master} with $\mathrm{Roots}(\theta) = 2^{\Span(\theta)}$ and $\mathrm{List}(f) = 2^{\Gamma_R(f)}$. Theorem~\ref{olala} follows as Recursive Forests satisfy the conditions with $\mathrm{Roots}(\theta) = \Indep(\Span(\theta))$ and $\mathrm{List}(f) = \Indep(\Gamma_R(f))$.

\subsection{A Sharp Analysis and the Role of Flaw Choice}\label{sec:fc}

In Section~\ref{sec:misc_proofs} we prove that Lemma~\ref{lemma:regeneration} is tight for a rather large class of algorithms, including the algorithm of Moser-Tardos~\cite{MT} when each flaw fixes the values of its variables, as in SAT, and the algorithm of Harris and Srinivasan for permutations~\cite{SrinivasanPerm}.
\begin{theorem}\label{tight}
Let $\beta = \min_{\sigma \in \Omega} \mu(\sigma)>0$. If $(D,\rho)$ regenerate $\mu$ at every flaw and $D$ is atomic, then for every $W = w_1, w_2, \ldots, w_t \in \mathcal{W}_t$,
\begin{equation}
\beta \le 
\frac{\Pr[W_t = W]}{\prod_{i = 1}^{t} \mu(w_i)} \le \beta^{-1} \enspace .  \label{panw_katw}
\end{equation}
\end{theorem}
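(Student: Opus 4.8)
The plan is to prove the two inequalities separately; the upper one is essentially free, and the lower one is where atomicity does all the work.

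For the upper bound I would simply invoke Lemma~\ref{lemma:regeneration}. Since $(D,\rho)$ regenerate $\mu$ at every flaw we are in the regenerative case, so $\gamma_i=\mu(f_i)$, and since $\theta(\sigma)\le 1\le \mu(\sigma)/\beta$ for every $\sigma$ the prefactor satisfies $\xi=\max_{\sigma}\theta(\sigma)/\mu(\sigma)\le \beta^{-1}$; hence $\Pr[W_t=W]\le \xi\prod_{i=1}^{t}\mu(w_i)\le \beta^{-1}\prod_{i=1}^{t}\mu(w_i)$. (When $\theta=\mu$, as in the Moser--Tardos and Harris--Srinivasan algorithms, $\xi=1$.)

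For the lower bound the idea is to derive an \emph{exact} formula for $\Pr[W_t=W]$, using atomicity twice. Write $w_j=f_{[j]}$ for the flaw addressed at step $j$, so $\mu(w_j)=\mu(f_{[j]})$. First, I would use atomicity to collapse the regeneration constraint from a linear system to a pointwise identity: for fixed $i$ and $\tau$, at most one $\sigma\in f_i$ has $\tau\in A(i,\sigma)$, so at most one term of $\sum_{\sigma\in f_i}\mu(\sigma)\rho_i(\sigma,\tau)=\mu(f_i)\mu(\tau)$ is nonzero, which forces
\[
\rho_i(\sigma,\tau)=\frac{\mu(f_i)\,\mu(\tau)}{\mu(\sigma)}\qquad\text{for every arc }\sigma\xrightarrow{i}\tau
\]
(this re-derives that $(D,\rho,\mu)$ must be harmonic, cf.\ Theorem~\ref{atomic_oracles}). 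Plugging this into $\Pr[\Sigma]=\theta(\sigma_1)\prod_{j=1}^{t}\rho_{[j]}(\sigma_j,\sigma_{j+1})$ for any flaw-choice-consistent trajectory $\Sigma=\sigma_1\xrightarrow{w_1}\cdots\xrightarrow{w_t}\sigma_{t+1}$ with $\theta(\sigma_1)>0$ (the only trajectories that contribute), the ratios $\mu(\sigma_{j+1})/\mu(\sigma_j)$ telescope and
\[
\Pr[\Sigma]=\frac{\theta(\sigma_1)}{\mu(\sigma_1)}\,\mu(\sigma_{t+1})\prod_{j=1}^{t}\mu(w_j)\enspace .
\]
Second, I would use atomicity again: given the flaw sequence $w_1,\dots,w_t$, a contributing trajectory is uniquely determined by its final state $\sigma_{t+1}$, since one reconstructs it backwards, $\sigma_j$ being the unique state of $f_{[j]}$ with an arc labelled $[j]$ into $\sigma_{j+1}$. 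Summing over final states, if $R\subseteq\Omega$ denotes the set of $\tau$ whose backward reconstruction succeeds, is flaw-choice consistent, and reaches a $\theta$-supported first state $\sigma_1=\sigma_1(\tau)$, then
\[
\Pr[W_t=W]=\left(\prod_{j=1}^{t}\mu(w_j)\right)\sum_{\tau\in R}\frac{\theta(\sigma_1(\tau))}{\mu(\sigma_1(\tau))}\,\mu(\tau)\enspace .
\]
Since $W\in\mathcal{W}_t$ we have $R\neq\emptyset$; with $\theta=\mu$ every ratio is $1$, so $\Pr[W_t=W]=\left(\prod_{j}\mu(w_j)\right)\mu(R)$, and $\beta\le \mu(R)\le 1$ gives both inequalities of~\eqref{panw_katw} at once (the lower one because $R$ contains at least one state, of $\mu$-mass $\ge\beta$).

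I expect the main obstacle to be the bookkeeping in the last step rather than any inequality: one must check that the trajectories contributing to $\Pr[W_t=W]$ are \emph{precisely} the flaw-choice-consistent length-$t$ prefixes (the post-$t$ behaviour sums out to $1$; the intermediate states are automatically flawed since they have flaws addressed) and that $\Sigma\mapsto\sigma_{t+1}$ is injective on them, which is exactly the hypothesis $b_i=1$. The one genuinely substantive point is the collapse $\rho_i(\sigma,\tau)=\mu(f_i)\mu(\tau)/\mu(\sigma)$; once that is in hand the telescoping and the counting are routine.
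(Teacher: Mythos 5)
Your proof is correct and follows essentially the same route as the paper: the upper bound is Lemma~\ref{lemma:regeneration} with $\xi\le\beta^{-1}$, and the lower bound rests on atomicity forcing $\rho_i(\sigma,\tau)=\mu(f_i)\mu(\tau)/\mu(\sigma)$ (the paper's Theorem~\ref{atomic_oracles}), telescoping the trajectory probability, and using $\mu(\sigma_{t+1})\ge\beta$ together with the existence of at least one trajectory realizing $W$ (the paper bounds by that single trajectory, while your exact formula $\Pr[W_t=W]=\mu(R)\prod_j\mu(w_j)$ via the final-state bijection is a mild refinement of the same idea, and like the paper it implicitly uses $\theta=\mu$ in the lower bound).
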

Equation~\eqref{panw_katw} tell us that an algorithm will  converge to a perfect object in polynomial time \emph{if and only if} the sum $\sum_{ W \in \mathcal{W}_t } \prod_{i =1}^{t} \mu(w_i) $ converges to a number less than $1$ as $t$ grows. In that sense, the quality of the algorithm's analysis depends solely on how well we approximate the set of possible witness sequences $\mathcal{W}_t$. 

The set $\mathcal{W}_t$ is clearly a function of how we choose which flaw to address in each step and therefore algorithmic performance clearly depends on the flaw choice mechanism (even more so, in this ``tight" case). However, in the Moser-Tardos analysis, as well as in the work of Harris and Srinivasan on permutations~\cite{SrinivasanPerm},  the flaw choice mechanism ``is swept under the rug''~\cite{mario_survey} and is allowed to be arbitrary. This can be explained as follows.   In those two settings, due to the symmetry of $\Omega$,  we can afford to approximate $\mathcal{W}_t$ in a way that completely ignores flaw choice, i.e., considering it adversarial, and still recover the LLL condition. In a very recent paper~\cite{Kolmogorov15}, Kolmogorov gives a more general symmetry condition under which the results of~\cite{AI} for the flaws/actions framework hold with arbitrary flaw choice. However, such symmetries can not be expected to hold in general settings, something reflected in Theorems~\ref{asymmetric} and~\ref{olala} in the specificity of the flaw-choice mechanism, while in Theorem~\ref{lem:master} it is reflected in the requirement of traceability.

\subsection{Applications: Incorporating Global Conditions}

To demonstrate the power of our framework we derive a novel bound for acyclic edge colorings, aimed at graphs of bounded degeneracy, a class including graphs of bounded treewidth. To get the result we heavily use the fact that we do not have to regenerate a measure (and so the result cannot be captured by the LLL). Unlike recent work on the problem~\cite{acyclic, kirousis} that also goes beyond the LLL, our result is established without any problem specific elements, but rather as a direct application of Theorem~\ref{olala}.

\section{Charging Flaws}\label{charging}

In Section~\ref{sec:setup} we defined how to assign to each flaw  $f_i$ a charge $\gamma_i$, depending on whether $(D,\rho)$ regenerate $\mu$ or not. We also stated that in the non-regenerative case $\gamma_i \ge \mu(f_i)$ always. 
Thus, ideally, we would like to use a sophisticated measure $\mu$ that assigns minimal probability mass to the flawed states and, at the same time, have $(D,\rho,\mu)$ that regenerate $\mu$ at every flaw. 
In reality, the more sophisticated $\mu$ is the harder regeneration becomes.  Therefore, realistically, we can either employ $(D,\rho, \mu)$ that regenerate $\mu$ at every flaw and get charges as small as possible, but for unsophisticated measures (like product measures), or we can forgo regeneration, pay the price $\gamma_i > \mu(f_i)$ to reflect the distortion of $\mu$ by $(D,\rho)$, and use more sophisticated measures. Crucially, making the latter choice typically also means that we can get a sparser causality graph by exploiting the flexibility afforded by not having to design $D$ so as to regenerate $\mu$, as in the non-regenerative case $D$ can be arbitrary.

\begin{lemma}\label{lem:comp_m_g}
$\gamma_i \ge \mu(f_i)$. 
\end{lemma}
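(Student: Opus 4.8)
The plan is to unpack the definition of $\gamma_i$ in the general (non-regenerative) case and recognize the right-hand side as a weighted average that must dominate the uniform average. Recall $\gamma_i = b_i \max_{\sigma \in f_i} \lambda_i^{\sigma}$ where $\lambda_i^{\sigma} = \max_{\tau \in A(i,\sigma)} \{\rho_i(\sigma,\tau)\,\mu(\sigma)/\mu(\tau)\}$. The first observation I would make is that, dropping $b_i \ge 1$ and replacing the maximum over $\sigma \in f_i$ by a sum (or by any fixed convex combination of the $\sigma$'s weighted by $\mu(\sigma)/\mu(f_i)$), we get
\[
\gamma_i \;\ge\; \sum_{\sigma \in f_i} \frac{\mu(\sigma)}{\mu(f_i)}\,\lambda_i^{\sigma}\;.
\]

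The core estimate is then a lower bound on $\lambda_i^{\sigma}$ for each individual $\sigma \in f_i$: since $\rho_i(\sigma,\cdot)$ is a probability distribution supported on $A(i,\sigma)$, and since a maximum of a set of numbers is at least any weighted average of them, I would take the average against the weights $\mu(\tau)/(\sum_{\sigma' \in A(i,\sigma)}\mu(\sigma'))$, or more simply observe
\[
\lambda_i^{\sigma} \;=\; \max_{\tau \in A(i,\sigma)} \rho_i(\sigma,\tau)\,\frac{\mu(\sigma)}{\mu(\tau)} \;\ge\; \frac{\mu(\sigma)}{\sum_{\tau \in A(i,\sigma)} \mu(\tau)}\,\sum_{\tau \in A(i,\sigma)} \rho_i(\sigma,\tau) \;=\; \frac{\mu(\sigma)}{\sum_{\tau \in A(i,\sigma)} \mu(\tau)}\;,
\]
where the inequality is the standard fact that $\max_\tau x_\tau \ge \sum_\tau p_\tau x_\tau$ for any probability vector $(p_\tau)$ applied with $x_\tau = \rho_i(\sigma,\tau)\mu(\sigma)/\mu(\tau)$ and $p_\tau = \mu(\tau)/\sum_{\tau'} \mu(\tau')$, so that $\sum_\tau p_\tau x_\tau = \mu(\sigma)\sum_\tau \rho_i(\sigma,\tau)/\sum_{\tau'}\mu(\tau')$. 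Actually the cleanest route is to note that the harmonic choice $\rho_i(\sigma,\tau) \propto \mu(\tau)$ minimizes $\lambda_i^\sigma$ over all valid $\rho$, giving exactly this value; but for the lemma one only needs the inequality direction, which follows from the averaging argument above without invoking optimality.

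Combining the two displays, $\gamma_i \ge \sum_{\sigma \in f_i} \frac{\mu(\sigma)}{\mu(f_i)} \cdot \frac{\mu(\sigma)}{\sum_{\tau \in A(i,\sigma)}\mu(\tau)}$, and it remains to show this is at least $\mu(f_i)$. This reduces to showing $\sum_{\sigma \in f_i} \mu(\sigma)^2 / (\sum_{\tau \in A(i,\sigma)}\mu(\tau)) \ge \mu(f_i)^2$. I would get this from Cauchy–Schwarz in the form $\left(\sum_{\sigma} \mu(\sigma)\right)^2 = \left(\sum_\sigma \frac{\mu(\sigma)}{\sqrt{s_\sigma}}\cdot\sqrt{s_\sigma}\right)^2 \le \left(\sum_\sigma \frac{\mu(\sigma)^2}{s_\sigma}\right)\left(\sum_\sigma s_\sigma\right)$ with $s_\sigma = \sum_{\tau \in A(i,\sigma)}\mu(\tau)$, PROVIDED $\sum_{\sigma \in f_i} s_\sigma \le 1$; and indeed $\sum_{\sigma \in f_i}\sum_{\tau \in A(i,\sigma)}\mu(\tau) = \sum_\tau \mu(\tau) |\{\sigma \in f_i : \tau \in A(i,\sigma)\}| = \sum_\tau \mu(\tau)\, b_i^\tau \le b_i \sum_\tau \mu(\tau) = b_i$. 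This is where the factor $b_i$ in the definition of $\gamma_i$ is used, and it is the main technical point to get right: without carrying the $b_i$ through (I suppressed it when passing to the average over $\sigma$), the Cauchy–Schwarz bound only yields $\gamma_i \ge \mu(f_i)^2/b_i$, so I must either keep the $b_i$ factor in front from the start (writing $\gamma_i = b_i\max_\sigma \lambda_i^\sigma \ge b_i \sum_\sigma \frac{\mu(\sigma)}{\mu(f_i)}\lambda_i^\sigma$) or handle the two cases of the charge definition separately. The regenerative case, where $\gamma_i = \mu(f_i)$, makes the inequality an equality and needs no argument. The main obstacle is thus purely bookkeeping: correctly tracking the $b_i$ factor through the averaging-plus-Cauchy–Schwarz chain so that it exactly cancels the $b_i$ appearing in $\sum_\tau \mu(\tau) b_i^\tau \le b_i$.
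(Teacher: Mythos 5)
Your argument is correct once you carry out the fix you yourself identify (keeping the factor $b_i$ in front when you pass from $\max_{\sigma\in f_i}\lambda_i^{\sigma}$ to the $\mu$-weighted average): then $\gamma_i \ge \frac{b_i}{\mu(f_i)}\sum_{\sigma\in f_i}\mu(\sigma)^2/s_\sigma \ge \frac{b_i}{\mu(f_i)}\cdot\frac{\mu(f_i)^2}{\sum_\sigma s_\sigma} \ge \mu(f_i)$, using your counting identity $\sum_{\sigma\in f_i}s_\sigma=\sum_{\tau}b_i^\tau\mu(\tau)\le b_i$. The route differs from the paper's only in the middle step. The paper also rests on exactly the two facts you use, namely $\lambda_i^{\sigma}\ge \mu(\sigma)/\sum_{\tau\in A(i,\sigma)}\mu(\tau)$ (in the disguised form $\mu(\sigma)\rho_i(\sigma,\tau)/\lambda_i^\sigma\le\mu(\tau)$) and the double-counting bound $\sum_{\sigma\in f_i}\sum_{\tau\in A(i,\sigma)}\mu(\tau)\le b_i$, but it bounds $\mu(f_i)/\gamma_i$ directly via the pointwise inequality $1/\gamma_i\le 1/(b_i\lambda_i^\sigma)$ for every $\sigma\in f_i$ — i.e., it effectively averages the reciprocals $1/\lambda_i^\sigma$ — which makes the whole proof a one-line chain with no Cauchy--Schwarz. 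Because you instead average $\lambda_i^\sigma$ itself, you must pay an AM--HM-type correction, which is precisely what your Cauchy--Schwarz step supplies; so your proof is a slightly more roundabout but valid variant, and the paper's version is what you get by averaging in the ``right'' direction from the start. One small slip in your diagnostic aside: dropping $b_i$ would leave you with $\gamma_i\ge\mu(f_i)/b_i$, not $\mu(f_i)^2/b_i$ (the latter is the bound on $\sum_\sigma\mu(\sigma)^2/s_\sigma$, before dividing by $\mu(f_i)$); this does not affect your corrected argument. The regenerative case is indeed immediate from the definition $\gamma_i=\mu(f_i)$, as you say.
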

\begin{proof}
By the definition of $\gamma_i$, in the regenerative case we have equality, while in the non-regenerative case,
\begin{equation}\label{eq:gamma_mu_comp}
\frac{\mu(f_i)}{\gamma_i}\le  \sum_{ \sigma \in f_i}  \frac{ \mu(\sigma) }{ b_i  \lambda_{i}^{\sigma}  }  = 
\sum_{ \sigma \in f_i} \sum_{ \tau \in A(i, \sigma)      } \frac{ \mu(\sigma) }{ b_i \lambda_{i}^{\sigma}  }   \rho_{i}( \sigma, \tau) \le
 \sum_{ \sigma \in f_i} \sum_{ \tau \in A(i, \sigma)      } \frac{\mu(\tau)}{b_i}  
 \le  1 \enspace ,
\end{equation}
where for the last inequality we used that $\sum_{ \sigma \in f_i}  \sum_{ \tau \in A(i, \sigma)}$ enumerates every $\tau \in \Omega$ at most $b_i$ times. 
\end{proof}

Observe that for any pair $(D,\mu)$ taking $\rho$ so that $(D,\rho,\mu)$ are harmonic, i.e., taking $\rho_i(\sigma,\tau) \propto \mu(\tau)$, minimizes $\lambda_i^{\sigma}$ for all $\sigma \in f_i$ \emph{simultaneously}. 
This optimality is the main reason that motivates harmonic algorithms. The other reason is the realization that designing $(D,\rho) $ that regenerate $\mu$ at every flaw is often achieved by $(D,\rho, \mu)$ being harmonic. As matter of fact, as we state in Theorem~\ref{atomic_oracles} below, if $D$ is atomic then $(D,\rho,\mu)$ being harmonic is \emph{necessary} for regeneration, a fact that also yields a characterization of the local structure of atomic digraphs regenerating a measure.

As a final remark, we note that  all results that correspond to ``algorithmizations of the LLL" correspond to the (very) special case where $(D,\rho)$ regenerate $\mu$ at every $f_i \in F$. 

\subsection{The atomic case}\label{sec:TheAtomicCase}
Atomic digraphs capture algorithms that appear in several settings, e.g., the Moser-Tardos algorithm~\cite{MT} when constraints are in CNF form, the algorithm of Harris and Srinivasan for permutations~\cite{SrinivasanPerm}, and others (see~\cite{AI}). Theorem~\ref{atomic_oracles} below asserts that $(D,\rho,\mu)$ being harmonic is a necessary condition for regeneration when $D$ is atomic. Observe that $(D,\rho,\mu)$ being harmonic means that we need not be concerned with the design of $\rho$ as it is implied by $(D,\mu)$. As for $D$ itself, the theorem implies that to build $A(i,\sigma)$ we must ``collect" arcs that satisfy~\eqref{synthiki_atomic} (while, presumably, keeping the causality graph as sparse as possible). These two facts offer guidance in designing $D$ so that $(D,\rho)$ regenerate $\mu$ at every flaw in atomic digraphs.

\begin{theorem} \label{atomic_oracles}
If $D$ is atomic and $(D,\rho)$ regenerate $\mu$ at every flaw, then $(D,\rho,\mu)$ are harmonic. Moreover, for every $i \in [m]$ and every $\sigma \in f_i$,
\begin{equation}
\sum_{\tau \in A(i,\sigma) } \mu(\tau) = \frac{\mu(\sigma)}{\mu(f_i)} \label{synthiki_atomic} \enspace .
\end{equation}
\end{theorem}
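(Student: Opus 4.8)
The plan is to unfold the regeneration identity~\eqref{eq:oracle} and exploit atomicity to collapse the sum on its left-hand side to a single term. Fix a flaw $f_i$ and a target state $\tau \in \Omega$. Regeneration says $\sum_{\sigma \in f_i} \mu(\sigma)\rho_i(\sigma,\tau) = \mu(f_i)\mu(\tau)$. The key observation is that $\rho_i(\sigma,\tau)$ is nonzero only when $\tau \in A(i,\sigma)$, i.e. when the arc $\sigma \xrightarrow{i} \tau$ exists in $D$; by atomicity ($b_i = 1$) there is at most one such $\sigma$. So the left-hand sum has at most one surviving term. I would split into two cases according to whether $\tau$ is the head of an $i$-labeled arc at all.

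First, suppose $\tau$ has an in-arc labeled $i$, say from the (unique) state $\sigma^\ast = \sigma^\ast(i,\tau) \in f_i$. Then regeneration collapses to $\mu(\sigma^\ast)\rho_i(\sigma^\ast,\tau) = \mu(f_i)\mu(\tau)$, which gives $\rho_i(\sigma^\ast,\tau) = \mu(f_i)\mu(\tau)/\mu(\sigma^\ast)$. Second, suppose $\tau$ has no in-arc labeled $i$; then the left-hand side is $0$, but $\mu(f_i)\mu(\tau) > 0$ (since $\mu$ is everywhere positive and $f_i$ is nonempty, as $A(i,\cdot)\ne\emptyset$ forces $f_i \ne \emptyset$ — actually one needs $f_i$ nonempty, which holds because regeneration is only imposed at flaws and $\mu(f_i)$ appears in a denominator). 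This is a contradiction, so in fact \emph{every} $\tau$ is the head of exactly one $i$-arc. Consequently, for each $\sigma \in f_i$ the set $A(i,\sigma)$ ranges over states whose $\rho_i$-values are forced to be $\mu(f_i)\mu(\tau)/\mu(\sigma)$, and these values are $\propto \mu(\tau)$ with $\sigma$-dependent constant $\mu(f_i)/\mu(\sigma)$. Normalization $\sum_{\tau \in A(i,\sigma)}\rho_i(\sigma,\tau) = 1$ then forces $\sum_{\tau\in A(i,\sigma)}\mu(\tau) = \mu(\sigma)/\mu(f_i)$, which is~\eqref{synthiki_atomic}; and plugging that back in shows $\rho_i(\sigma,\tau) = \mu(\tau)/\sum_{\tau'\in A(i,\sigma)}\mu(\tau')$, i.e. $(D,\rho,\mu)$ are harmonic.

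There is a subtlety I would flag: for the contradiction in the second case I used that $\mu(f_i) > 0$, which needs $f_i \ne \emptyset$. If one permits $f_i = \emptyset$ the regeneration condition is vacuous (the denominator $\mu(f_i)$ is $0$), so we may harmlessly assume $f_i \ne \emptyset$, or restrict attention to flaws that can actually be addressed — and a state in $f_i$ exists whenever $f_i$ is addressed, so this is not a real restriction. A second subtlety is that I should double-check the direction of the collapse: atomicity is defined via $b_i = \max_\tau |\{\sigma \in f_i : \tau \in A(i,\sigma)\}| = 1$, which is exactly the statement that each $\tau$ has at most one $i$-in-arc originating in $f_i$ — precisely what the argument needs, so no reindexing headaches arise.

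The main obstacle, such as it is, is conceptual rather than computational: one must notice that regeneration \emph{and} positivity of $\mu$ together force the $i$-arcs to be not just ``at most one per target'' but ``exactly one per target'' — i.e. the map $\tau \mapsto \sigma^\ast(i,\tau)$ is total on $\Omega$ — which is what makes the $\rho_i$ values completely pinned down and yields the clean harmonic formula. After that, everything is a one-line algebraic rearrangement plus the normalization constraint $\sum_\tau \rho_i(\sigma,\tau) = 1$. I would write the proof in essentially the order above: (1) reduce~\eqref{eq:oracle} using atomicity, (2) argue totality of the in-arc map via positivity, (3) read off $\rho_i(\sigma^\ast,\tau)$, (4) impose normalization to get~\eqref{synthiki_atomic}, (5) substitute back to recognize the harmonic form~\eqref{eq:harmonic_rho_def}.
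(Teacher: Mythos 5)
Your proposal is correct and follows essentially the same route as the paper: atomicity collapses the regeneration sum~\eqref{eq:oracle} to a single term, positivity of $\mu$ forces every $\tau\in\Omega$ to have exactly one incoming $i$-arc, the value $\rho_i(\sigma,\tau)=\mu(\tau)\mu(f_i)/\mu(\sigma)$ is read off, and normalization of $\rho_i(\sigma,\cdot)$ yields both harmonicity and~\eqref{synthiki_atomic}. The only (immaterial) difference is order of the last two steps — the paper concludes harmonicity directly from the proportionality $\rho_i(\sigma,\tau)\propto\mu(\tau)$ and then sums over $\tau\in A(i,\sigma)$, whereas you invoke normalization first.
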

\begin{proof}
If $D$ is atomic, $\mu > 0$, and $(D,\rho)$ regenerate $\mu$ at every flaw $f_i$, it follows that for every $\tau \in \Omega$ there is \emph{exactly} one $\sigma \in f_i$ such that $\rho_i(\sigma,\tau)>0$. (And also that $\bigcup_{\sigma \in f_i} A(i,\sigma) = \Omega$). Therefore, regeneration at $f_i$ in this setting is equivalent to
\begin{equation}\label{eq:ptoma}
\text{for every $\sigma \in f_i$ and the unique $\tau \in A(i,\sigma)$:}\quad
 \rho_i(\sigma, \tau) = \mu(\tau)  \frac{\mu(f_i) }{ \mu(\sigma)}  \enspace .
\end{equation}
(Note that for given $D, \mu$ there may be no $\rho$ satisfying~\eqref{eq:ptoma}, as we also need that $\sum_{\tau \in A(i,\sigma)}\rho_i(\sigma,\tau) = 1$.)

Since $\rho_i(\sigma, \tau) \propto \mu(\tau)$ in~\eqref{eq:ptoma} we see that $\rho$ is harmonic. Summing~\eqref{eq:ptoma} over $\tau \in A(i,\sigma)$ yields~\eqref{synthiki_atomic}.
\end{proof}

\subsection{ Improved charges for the uniform measure case}\label{sec:AI_improvement}

As mentioned, the framework of~\cite{AI} amounts to the case where $D$ is atomic, $\mu$ is uniform, and $(D,\rho,\mu)$ is harmonic, so that in every step a uniformly random element of $A(i,\sigma)$ is selected. When $\mu$ is uniform, we prove in Section~\ref{sec:harmonic} that $\gamma_i$ is the inverse of $\min_{\sigma \in f_i} a_i^{\sigma}$, where $a_i^{\sigma} = |A(i,\sigma)|$ and, thus, our Theorems~\ref{asymmetric} and~\ref{olala} recover perfectly the results of~\cite{AI}. 

To deal with the case where $D$ is not naturally atomic, e.g., when a flaw occurs under  more than one value assignments to its variables, one can proceed to refine each $f_i \in F$ into $b_i = \max_{\sigma \in f_i} b_i^{\sigma}$ flaws so that $D$ becomes atomic. In subsection~\ref{uniform}, using the machinery developed for the general case in Section~\ref{sec:harmonic}, we remove the need to ``atomize" $D$ \emph{and} derive bounds dominating those that come from atomization. To achieve this we modify the walk so that each element $\tau \in A(i,\sigma)$ is selected with probability proportional to the inverse of its in-degree, i.e., $\rho_i(\sigma,\tau) = (b_i^{\tau} \sum_{\sigma' \in A(i,\sigma) } 1/b_i^{\sigma'} )^{-1}$. Doing this, we get that the charge we should assign to each flaw $f_i$ (assuming we are in the general case with uniform measure) is:

\[
\phi_i  
= \max_{(\sigma,\tau) \in D_i} \frac{b_i^{\tau}}{a_i^{\sigma}} \le \frac{b_i}{a_i}
\enspace ,
\]
where the right hand side above is the charge on flaw $f_i$ that would be assigned by atomization, potentially much greater than our bound $\max_{(\sigma,\tau) \in D_i} b_i^{\tau}/a_i^{\sigma}$.

\section{Proof of Lemma~\ref{lemma:regeneration}}

In Section~\ref{sec:probab}  we give the proof for  the regenerative case when the measure $\mu$ is sampleable (while $(D,\rho, \mu)$ are not necessarily harmonic). The proof for that case mimics that of~\cite{kirousis} and~\cite{HV}. In Section~\ref{sec:harmonic} we show the proof for the general (non-regenerative) case. The proof for the regenerative case when $\mu$ is not sampleable but   $D$ is atomic, is given as a special case of that proof in Section~\ref{sec:misc_proofs}. Finally, in Section~\ref{uniform} we show how to get the improved bounds for the general case with uniform measure, described in Section~\ref{sec:AI_improvement}.

\subsection{The Regenerative Case with Sampleable $\mu$}\label{sec:probab}
 
We start by proving Lemma~\ref{lemma:regeneration} for the case where $\theta = \mu$ and $(D,\rho)$ regenerate $\mu$ at every flaw.
 
\begin{lemma}\label{lem:prod_form}
If $\theta = \mu$ and $(D,\rho)$ regenerate $\mu$ at every $f_i \in F$, then for every sequence $W = w_1,\ldots,w_t$,
\[
\Pr[W_t = W] \le  \prod_{i = 1}^{t} \mu(w_i) 
\enspace .
\]
\end{lemma}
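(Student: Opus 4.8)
The plan is to compute $\Pr[W_t = W]$ by summing over all trajectories $\Sigma$ whose witness sequence is exactly $W$, and to telescope the measure-regeneration identity~\eqref{eq:oracle} across the $t$ steps. Write a generic such trajectory as $\sigma_1 \xrightarrow{w_1} \sigma_2 \xrightarrow{w_2} \cdots \xrightarrow{w_t} \sigma_{t+1}$, where at step $i$ the flaw $w_i = f_{[i]}$ is addressed, which forces $\sigma_i \in f_{[i]}$, and then $\sigma_{i+1} \in A([i],\sigma_i)$ is chosen with probability $\rho_{[i]}(\sigma_i,\sigma_{i+1})$. Since $\theta = \mu$, the probability of this trajectory, \emph{ignoring} the constraints imposed by flaw choice (which only restrict which trajectories are counted, and hence can only decrease the sum), is at most
\[
\mu(\sigma_1)\,\rho_{[1]}(\sigma_1,\sigma_2)\,\rho_{[2]}(\sigma_2,\sigma_3)\cdots\rho_{[t]}(\sigma_t,\sigma_{t+1})\enspace .
\]
So $\Pr[W_t = W]$ is bounded above by the sum of this quantity over all $(\sigma_1,\ldots,\sigma_{t+1})$ with $\sigma_i \in f_{[i]}$ for $i \in [t]$ and $\sigma_{i+1}\in A([i],\sigma_i)$.

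The key step is to evaluate this sum from the inside out (equivalently, from the left). Fix $\sigma_2,\ldots,\sigma_{t+1}$ and sum over $\sigma_1 \in f_{[1]}$: by the regeneration identity~\eqref{eq:oracle} applied to $f_{[1]}$, $\sum_{\sigma_1 \in f_{[1]}} \mu(\sigma_1)\rho_{[1]}(\sigma_1,\sigma_2) = \mu(f_{[1]})\,\mu(\sigma_2) \le \mu(\sigma_2)$. This replaces the leading $\mu(\sigma_1)\rho_{[1]}(\sigma_1,\sigma_2)$ factor with $\mu(f_{[1]})\mu(\sigma_2)$, and now $\sigma_2$ plays the role that $\sigma_1$ did. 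Iterating: having reduced to a sum over $\sigma_k,\ldots,\sigma_{t+1}$ with a leading factor $\mu(\sigma_k)$, sum over $\sigma_k \in f_{[k]}$ using~\eqref{eq:oracle} for $f_{[k]}$ to get $\mu(f_{[k]})\mu(\sigma_{k+1})$; note the constraint $\sigma_{k+1}\in A([k],\sigma_k)$ is automatically absorbed because $\rho_{[k]}(\sigma_k,\sigma_{k+1}) = 0$ whenever $\sigma_{k+1}\notin A([k],\sigma_k)$, so extending the sum to all $\sigma_k \in f_{[k]}$ is exactly~\eqref{eq:oracle}. After $t$ such steps we are left with $\sum_{\sigma_{t+1}\in\Omega}\mu(\sigma_{t+1})\prod_{i=1}^t \mu(f_{[i]}) = \prod_{i=1}^t \mu(w_i)$, since $\sum_{\sigma_{t+1}}\mu(\sigma_{t+1}) = 1$ (and $\mu(f_{[i]}) \le 1$ could alternatively be used, but equality suffices here). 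This gives the claimed bound.

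The one point requiring care — and the main obstacle — is making precise that restricting to trajectories with a \emph{prescribed} witness sequence (as dictated by the flaw-choice mechanism) can only shrink the sum, so that the ``ignore flaw choice'' overcounting argument is legitimate. Concretely, the event $\{W_t = W\}$ is a disjoint union over trajectories, and each trajectory consistent with $W$ has probability equal to the product $\mu(\sigma_1)\prod_i \rho_{[i]}(\sigma_i,\sigma_{i+1})$ times the product of the (conditional) flaw-choice probabilities, which are at most $1$; dropping them and then also dropping the requirement that the trajectory actually arises from the flaw-choice mechanism (i.e. summing over \emph{all} $(\sigma_i)$ with $\sigma_i\in f_{[i]}$, $\sigma_{i+1}\in A([i],\sigma_i)$) only adds nonnegative terms. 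Once this monotonicity is nailed down, the telescoping computation above is routine. Finally, Lemma~\ref{lemma:regeneration} in the form stated (with the prefactor $\xi$ and general $\theta$) follows by the trivial observation $\theta(\sigma_1) \le \xi\,\mu(\sigma_1)$ for every $\sigma_1$, which replaces the leading $\mu(\sigma_1)$ by $\theta(\sigma_1) \le \xi \mu(\sigma_1)$ and pulls $\xi$ out of the sum before running the same telescoping argument.
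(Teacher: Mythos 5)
Your proof is correct and follows essentially the same route as the paper: both arguments drop the flaw-choice probabilities, requiring only $\sigma_i \in w_i$ at each step, and then apply the regeneration identity~\eqref{eq:oracle} once per step. The only difference is presentational — the paper packages the step-by-step application of~\eqref{eq:oracle} as a coupling with an auxiliary process $C$ and an induction showing the conditional state distribution remains $\mu$, whereas you unroll the same computation as an explicit telescoping sum over state sequences.
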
 
\begin{proof}
To bound $\Pr[W_t = W]$ we will drop the requirement that flaw $w_i$ is selected by the flaw-choice mechanism at $\sigma_i$ and only require that $\sigma_i \in w_i$. To bound this latter probability we introduce a randomized process $C$ which given as input an arbitrary sequence of flaws $a_1, a_2, \ldots, a_t$ proceeds as follows: 
\begin{itemize}
\item
Select a state $\sigma_1$ according to $\theta$ and set $\mathsf{fail}(0)=0$
\item
For $i$ from 1 to $t$ do: 
	\begin{itemize}
	\item
	If $\sigma_i \not\in a_i$, then set $\mathsf{fail}(i)=1$ else set $\mathsf{fail}(i)=\mathsf{fail}(i-1)$
	\item
	Address $a_i$ at $\sigma_i$ according to $(D,\rho)$, i.e., set $\sigma_{i+1} = \tau \in A(i,\sigma)$ with probability $\rho_i(\sigma_i,\tau)$
	\end{itemize}
\end{itemize}
Observe that the sequence $\mathsf{fail}(i)$ is non-decreasing. By coupling $C$ with the algorithm we readily get
\begin{equation}\label{eq:cprocess}
\Pr\left[ W_t  = w_1,\ldots, w_t \right] \le 
\Pr_C\left[ \bigcap_{i=1}^t \sigma_i \in w_i \right] \enspace .
\end{equation}

For $t \ge 0$, let $P(t)$ be the proposition: for every $A = a_1,\ldots,a_t$ and every $\tau \in \Omega$, on input $A$
\begin{equation}\label{eq:regeneration}
\Pr_C\left[ \sigma_{t+1} = \tau \middle | \mathsf{fail}(t)=0 \right ]  = \mu(\tau) \enspace .
\end{equation}
We will prove that $P(t)$ holds for all $t \ge 0$ by induction on $t$. Along~\eqref{eq:cprocess} this readily implies the lemma. 

$P(0)$ follows from $\theta = \mu$. Assume that $P(t)$ holds for all $t <s$ and consider any sequence $a_1,\ldots,a_s$. If $\mathsf{fail}(s)=0$ then $\mathsf{fail}(s-1)=0$ as well which, by $P(s-1)$, implies that $\sigma_s$ is distributed according to $\mu$. Moreover, $\sigma_s \in a_{s}$ and, therefore, $\sigma_{s+1} $ is selected by addressing $a_s$ at $\sigma_s$. Therefore for every state $\tau \in \Omega$,
\begin{eqnarray*}
\Pr_C\left[ \sigma_{s+1} = \tau \middle | \mathsf{fail}(s)=0 \right ]  = 
\sum_{\sigma \in a_s} \frac{\mu(\sigma)}{\mu(a_s)} \rho_{[s]}(\sigma,\tau) = \mu(\tau) \enspace , 
\end{eqnarray*}
where the second equality holds because $(D,\rho)$ regenerate $\mu$ every flaw in $F$.
\end{proof}

\subsection{The General Case}\label{sec:harmonic}

Since the flaw addressed in each step depends only on the trajectory up to that point and not on any future randomness, the probability of any specific $t$-trajectory $\Sigma = \sigma_1 \xrightarrow{w_1}   \sigma_2 \xrightarrow{w_2}  \ldots    \sigma_{t} \xrightarrow{w_t}  \sigma_{t+1}$ is
\begin{equation}\label{eq:trivial_bound}
\theta(\sigma_1)
\prod_{i = 1}^{t} \rho_{[i]}(\sigma_i, \sigma_{i+1}) 
\enspace ,
\end{equation}
where recall that $[i] = j$ such that $w_i = f_j$. To bound $\Pr[W_t = W]$ we will sum~\eqref{eq:trivial_bound} over all $t$-trajectories with witness sequence $W$. 

Recall that $b_i^{\tau} = |\{\sigma \in f_i: \tau \in A(i,\sigma)\}|$. For each pair $\langle W, \sigma_{t+1}\rangle$, where $\sigma_{t+1} \in \Omega$, we construct an edge-weighted tree as follows. The root of the tree is $\sigma_{t+1}$. Let $P_{t} = \{\sigma: \sigma_{t+1} \in A([t],\sigma)\}$, i.e., $P_t$ contains the $b_{[t]}^{\sigma_{t+1}}$ states in $D$ with an arc to $\sigma_{t+1}$ labelled $f_{[t]}$ which, thus, are the possible states immediately prior to $\sigma_{t+1}$ in any trajectory with witness sequence $W$. The progeny of the root consists of a child for each $\sigma_t \in P_t$,  each parent-child edge weighted by $\rho_{[t]}(\sigma_t,\sigma_{t+1})$. Each child vertex acquires progeny in the same manner, i.e., it has one child per possible immediately prior state, the corresponding edge annotated by $\rho_{[t-1]}(\sigma_{t-1},\sigma_{t})$. And so on, until $W$ is exhausted. The constructed tree has the following properties:
\begin{itemize}
\item
Its root-to-leaf paths are in 1--1 correspondence with the trajectories compatible with $\langle W, \sigma_{t+1}\rangle$.
\item
The product of the numbers along each root-to-leaf path (trajectory) equals $\prod_{i = 1}^{t} \rho_{[i]}(\sigma_i, \sigma_{i+1})$.
\end{itemize}
Thus, to compute the probability of all trajectories with witness sequence $W$ and final state $\sigma_{t+1}$ it suffices to sum, over all root-to-leaf paths, the product of the probability of each path's leaf vertex under $\theta$ with the product of the weights along the path's edges. 

To bound this sum for a measure $\mu$ we define for every $i \in [m]$,
\begin{equation}\label{eq:res_def}
\pres_i^{\sigma} = \pres_i^{\sigma}(\mu)
= 
\mu(\sigma)
\max_{\tau \in A(i,\sigma)} \rho_i(\sigma,\tau) 
 \frac{b_i^{\tau}}{\mu(\tau)} \enspace .
\end{equation}
Observe that, by definition, for every $(\sigma,\tau)$ such that $\tau \in A(i,\sigma)$ we have $b_i^{\tau} \ge 1$ and, therefore, 
\begin{equation}\label{eq:rho_bound}
\rho_i(\sigma, \tau)  \le \frac{\res_{i}^{\sigma}}{b_i^{\tau}} \, \frac{\mu(\tau)}{\mu(\sigma)} \enspace .
\end{equation}
Therefore, we can bound the probability of any $t$-trajectory (root-to-leaf path) $\Sigma = \sigma_1,\ldots,\sigma_{t+1}$ by\begin{align}\label{eq:fotara}
\theta(\sigma_1)
\prod_{i = 1}^{t} \rho_{[i]}(\sigma_i, \sigma_{i+1}) 
\le 
\theta(\sigma_1)	\prod_{ i = 1}^{t} \frac{\res_{[i]}^{\sigma}}{b_{[i]}^{\sigma_{i+1}}} \frac{\mu(\sigma_{i+1})}{\mu(\sigma_i) } 
= \frac{\theta(\sigma_{1})}{\mu(\sigma_1)} 
\mu(\sigma_{t+1}) \prod_{i = 1}^{t} \frac{\res_{[i]}^{\sigma}}{b_{[i]}^{\sigma_{i+1}}}
\enspace .
\end{align}

If in the tree we now replace the weight $\rho_{[i]}(\sigma_i, \sigma_{i+1})$ of every edge by $\res_{[i]}^{\sigma}/b_{[i]}^{\sigma_{i+1}}$ and the weight of every leaf by $\xi\mu(\sigma_{t+1})$, where $\xi = \max_{\sigma \in \Omega}\{\theta(\sigma)/\mu(\sigma)\}$, we see by~\eqref{eq:fotara}  that the aforementioned sum over all root-to-leaf paths will give an upper bound on the total probability of trajectories with witness sequence $W$ and final state $\sigma_{t+1}$. A key thing to observe   is that after this edge-weight replacement, any vertex of the tree, say one corresponding to a state $\tau$, will have some number $b_{[i]}^{\tau}$ children, while each of its parent-child edges will have weight $\res_{[i]}^{\sigma}/b_{[i]}^{\tau}$, for some state $\sigma$. This fact puts as in a position to perform the summation. 

For each $i \in [m]$, let $\phi_i = \max_{\sigma \in f_i} \pres_{i}^{\sigma}$. Consider any vertex $v$ of the tree whose children are leaves and let $\tau$ be $v$'s state. Replacing the $b_{[1]}^{\tau}$ children of $v$ with a single leaf child, connected to $v$ by an edge of weight $\phi_{[1]}$, can only increase the contribution to the sum of the subtree rooted at $v$ since $b_{[i]}^{\tau} \times \res_{[i]}^{\sigma}/b_{[i]}^{\tau} = \res_{[i]}^{\sigma} \le \phi_i$. Proceeding to collapse the progeny of all other vertices in the same level of the tree as $v$ and then moving on to the next level, etc.\ collapses the entire tree to a single path whose product of edge-weights is $\prod_{i = 1}^{t} \phi_{[i]}$ and whose leaf was weight $\xi\mu(\sigma_{t+1})$, implying 
\begin{align*}
\Pr[W_t = W] \le \sum_{\sigma_{t+1} \in \Omega}
\xi\mu(\sigma_{t+1})
\prod_{i = 1}^{t} \phi_{[i]}
=
\xi  \prod_{i = 1}^{t} \phi_{[i]} \enspace .
\end{align*}
To conclude the proof for the general case observe that for every $i \in [m]$,
\begin{equation}\label{eq:itslate}
\phi_i = \max_{\sigma \in f_i} \pres_{i}^{\sigma} 
= 	\max_{\sigma \in f_i} \left\{\mu(\sigma) \max_{\tau \in A(i,\sigma)} \rho_i(\sigma,\tau) \frac{b_i^{\tau}}{\mu(\tau)} \right\}
\le	\max_{\tau \in \Omega} b_i^{\tau} \max_{\sigma \in f_i} \left\{\mu(\sigma) \max_{\tau \in A(i,\sigma)}  \frac{\rho_i(\sigma,\tau)}{\mu(\tau)}\right\}
= \gamma_i \enspace . 
\end{equation}

\subsection{The Atomic Case and Proof of Theorem~\ref{tight}}\label{sec:misc_proofs}

If $D$ is atomic and $(D,\rho)$ regenerate $\mu$ at every $f_i$, Theorem~\ref{atomic_oracles} implies that $(D,\rho,\mu)$ are harmonic and thus
\begin{equation}\label{eq:unify}
\xi_i^{\sigma} = \lambda_i^{\sigma} = \max_{\tau \in A(i,\sigma)} \left\{\rho_i(\sigma,\tau) \, \frac{\mu(\sigma)}{\mu(\tau)}\right\} = 
\max_{\sigma \in f_i} \frac{\mu(\sigma)}{\sum_{\sigma' \in A(i,\sigma)} \mu(\sigma')} = \mu(f_i) \enspace ,
\end{equation}
where the last equality follows from~\eqref{synthiki_atomic}. This establishes the regenerative case of Lemma~\ref{lemma:regeneration} for atomic $D$.

To prove Theorem~\ref{tight} we  note that Lemma~\ref{lemma:regeneration}, valid for any $(D,\rho,\mu,\theta)$, readily yields the upper bound. For the lower bound we observe that in order for $W \in \mathcal{W}_t$ there must exist at least one trajectory $\Sigma^*$ such that $W_t(\Sigma^*) =  W$. Since, by~\eqref{eq:unify}, we have $\lambda_i^{\sigma} = \rho_i(\sigma,\tau) \mu(\sigma)/\mu(\tau)$ we can conclude that
\[
\Pr[W_t = W ] \ge 
\Pr \left[\Sigma = \Sigma^* \right] = 
\theta(\sigma^*_1) \prod_{i = 1}^{t} \rho_{[i]}(\sigma^*_i, \sigma^*_{i+1}) =
\theta(\sigma^*_1) \prod_{i = 1}^{t} \lambda_{[i]}^{\sigma_i^*} \frac{\mu(\sigma^*_{i+1})}{\mu(\sigma^*_{i})} =
\mu(\sigma_{t+1} )  \prod_{i = 1 }^{t} \mu(w_i)   \enspace . 
\]

\subsection{The Special Case of the Uniform Measure}\label{uniform}

To get the improved bounds for the uniform measure recall that  $a_i = \min_{\sigma \in f_i} a_i^{\sigma} = \min_{\sigma \in f_i}  |A(i,\sigma)|$.  If $D_i$ is the subgraph of $D$ comprising the arcs labeled by $f_i$, then~\eqref{eq:itslate} yields
\[
\phi_i = \max_{\sigma \in f_i} \pres_{i}^{\sigma} 
= 	\max_{\sigma \in f_i} \max_{\tau \in A(i,\sigma)} \rho_i(\sigma,\tau) b_i^{\tau}
= \max_{(\sigma,\tau) \in D_i} \frac{b_i^{\tau}}{a_i^{\sigma}} \le \frac{b_i}{a_i}
\enspace.
\]

\section{ Proof of Theorem~\ref{lem:master}}\label{sec:bpproof}

Per the hypothesis of Theorem~\ref{lem:master}, each bad $t$-trajectory on $D$ is associated with a rooted labeled witness forest with $t$ vertices such that given the forest we can reconstruct the sequence of flaws addressed along the $t$-trajectory. Recall that neither the trees, nor the nodes inside each tree in the witness forest are ordered. To prove Theorem~\ref{lem:master} we will give $T_0 $ such that the probability that a $(T_0+s)$-trajectory on $D$ is bad is exponentially small in $s$. Let $\widetilde { \mathcal{W}_t } \supseteq  \mathcal{W}_t$ be the set of witness sequences of size $t$ that correspond to these forests (sometimes, when it is clear from the context, we abuse the notation and use $\widetilde {\mathcal{W}_t}$ to denote the set of of witness forests  themselves). Per our discussion above  (see Lemma~\ref{lemma:regeneration}) to prove the theorem it suffices to prove that  $\max_{ \sigma \in \Omega} \frac{\theta(\sigma) }{\mu(\sigma)}   \sum_{W \in \widetilde{ \mathcal{W}}_t} \prod_{i = 1}^{t} \gamma_{[i]} $ is exponentially small in $s$ for $t=T_0+s$. 

To facilitate counting we fix an arbitrary ordering $\pi$ of $F$ and map each  witness forest  into the unique ordered forest that results by ordering the trees in the forest according to the labels of their roots and similarly ordering the progeny of each vertex according to $\pi$ (recall that both the flaws labeling the roots and the flaws labeling the children of each vertex are distinct). 

Having induced this ordering for the purpose of counting, we will encode each witness forest as a rooted, ordered $d$-ary forest $T$ with exactly $t$ nodes, where $d = \max_{f \in F} |\List(f)|$. In a rooted, ordered $d$-ary forest both the roots and the at most $d$ children of each vertex are ordered. We think of the root of $T$ as having  reserved for each flaw $f \in \mathrm{Roots}(\theta)$ a slot. If $f \in \mathrm{Roots}(\theta)$ is the $i$-th largest flaw in $F$ according to $\psi$ then we fill the $i$-th slot (recall that the flaws labeling the roots of the witness forest are distinct and that, as a set, belong in the set $\mathrm{Roots}(\theta)$).

Each node $v$ of $T$ corresponds to a node of the witness forest and therefore to a flaw $f$ that was addressed at some point in the $t$-trajectory of the algorithm. Recall now that each node in the witness forest that is labelled by a flaw $f$  has children labelled by distinct flaws in $\List (f)$. We thus think of each node $v$ of $T$ as having precisely  $|\List(f)|$ slots reserved for each flaw $g \in   \List(f)$ (and, thus, at most $d$ reserved slots in total). For each  $g \in  \List(f)$ we fill the slot reserved for $g$ and make it a child of $v$ in $T$. Thus, from $T$ we can reconstruct the sequence of  flaws addressed with the algorithm.  To proceed, we use ideas  from~\cite{PegdenIndepen}. Specifically, we introduce a branching process that produces only ordered $d$-ary forests that correspond to witness forests and bound $\sum_{W \in \widetilde{\mathcal{W}_t}} \prod_{i = 1}^{t} \gamma_{[i]} $ by analyzing it.

Given any real numbers $0 < \param_i < \infty$ we define $x_i = \frac{\param_i}{\param_i +1} $ and write $\mathrm{Roots}(\theta) = \mathrm{Roots}$ to simplify notation. To start the process we produce the roots of the labeled forest by rejection sampling as follows: For each flaw $g \in F$ independently, with probability $x_g$ we add a root with label $g$. If the resulting set of roots is in $\mathrm{Roots}$ we accept the birth. If not, we delete the roots created and try again. In each subsequent round we follow a very similar procedure. Specifically, at each step, each node $u$ with label $\ell$ ``gives birth", again, by rejection sampling: For each flaw $g \in \List(\ell)$ independently, with probability $x_{g}$ we add a vertex with label $g$ as a child of $u$. If the resulting set of children of $u$ is in $\List(\ell)$ we accept the birth. If not, we delete the children created and try again. It is not hard to see that this process creates every possible witness forest  with positive probability. Specifically, for a vertex labeled by $\ell$, every set $S \not\in\List(\ell) $ receives probability 0, while every set $S\in \List(\ell)$ receives probability proportional to
\[
w_{\ell}(S) = \prod_{g \in S} x_g \prod_{h \in  \left(\List(\ell) \right) \setminus S} \left(1- x_h\right) \enspace .
\]

To express the exact probability received by each $S\in\List(\ell)$ we define
\begin{equation}\label{eq:d_def}
Q(S)  :=   \frac{ \prod_{g \in S} x_g }{\prod_{g \in S}(1 - x_g)   }  = \prod_{g\in S} \param_g 
\end{equation}
and let $Z_{\ell} = \prod_{f \in \left(  \List(\ell)  \right)}\left(1 - x_f\right)$. 
We claim that $w_{\ell}(S) = Q(S) \, Z_{\ell} $. To see the claim observe that
\[
\frac{ w_{\ell}(S)}
{Z_{\ell}}
=  \frac{ \prod_{g \in S} x_g \prod_{h \in  \left(\List(\ell) \right) \setminus S} \left(1- x_h\right) }
{\prod_{f \in  \List(\ell)  }(1 - x_f) }  
=  \frac{ \prod_{g \in S} x_g }{\prod_{g \in S}(1 - x_g)   }  
= Q(S) \enspace .
\]
Therefore, each $S\in\List(\ell)$ receives probability equal to
\begin{equation}\label{eq:sing_birth}
\frac{w_{\ell}(S)}{\sum_{B \in \List(\ell)} w_{\ell}(B)}
=
\frac
{Q(S) Z_{\ell}}
{\sum_{B \in\List(\ell)} Q(B) Z_{\ell}}
=\frac{Q(S)}{\sum_{B \in\List(\ell)} Q(B)}
 \enspace .
\end{equation}

Similarly, each set $ R \in \mathrm{Roots}$ receives probability equal to $Q(R) \left( \sum_{B \in \mathrm{Roots} }Q(B)  \right)^{-1}$.

\begin{lemma}\label{branchingLemma}
The branching process described above produces every tree $\phi \in \widetilde{\mathcal{W}}_t$ with probability 
\begin{align*}
p_{\phi} =  \left( \sum_{ S \in  \mathrm{Roots}}
\prod_{i \in S} \param_i  
 \right)^{-1}  \prod_{v \in \phi}  \frac{\param_v}{ \sum_{S \in\List(v)} Q(S)}
\end{align*}
\end{lemma}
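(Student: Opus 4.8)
\textbf{Proof plan for Lemma~\ref{branchingLemma}.}

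The plan is to compute $p_\phi$ by multiplying, over all the ``birth events'' that must occur to produce exactly the forest $\phi$, the probabilities derived in~\eqref{eq:sing_birth} and its root analogue, and then to telescope the normalization factors. Concretely, the branching process produces $\phi$ if and only if (i) the initial root-birth step accepts exactly the set $R \subseteq F$ of labels appearing on the roots of $\phi$, and (ii) for every vertex $v$ of $\phi$, the birth step at $v$ accepts exactly the set $S_v \in \List(v)$ consisting of the labels of the children of $v$ in $\phi$ (where $S_v = \emptyset$ when $v$ is a leaf, which is allowed since $\emptyset \in \List(\ell)$ may be assumed, or is handled by the convention that products over the empty set equal $1$). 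Since these birth events are mutually independent across vertices, $p_\phi$ is the product of their individual probabilities.

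The next step is to substitute the closed forms. By~\eqref{eq:sing_birth}, the birth at a vertex $v$ with label $\ell = \ell(v)$ accepting $S_v$ has probability $Q(S_v)\big/\sum_{B \in \List(\ell)} Q(B)$, and by~\eqref{eq:d_def} we have $Q(S_v) = \prod_{g \in S_v}\param_g$. The root-birth accepting $R$ has probability $Q(R)\big/\sum_{B \in \mathrm{Roots}} Q(B) = \big(\prod_{i \in R}\param_i\big)\big/\sum_{S \in \mathrm{Roots}}\prod_{i \in S}\param_i$, which supplies exactly the leading inverse-sum factor in the statement. Multiplying everything,
\[
p_\phi = \frac{\prod_{i \in R}\param_i}{\sum_{S \in \mathrm{Roots}}\prod_{i \in S}\param_i}\;\prod_{v \in \phi}\frac{Q(S_v)}{\sum_{B \in \List(\ell(v))} Q(B)}
= \left(\sum_{S \in \mathrm{Roots}}\prod_{i \in S}\param_i\right)^{-1}\prod_{i \in R}\param_i\;\prod_{v \in \phi}\frac{\prod_{g \in S_v}\param_g}{\sum_{B \in \List(\ell(v))} Q(B)}\enspace.
\]
It remains to show that the numerator $\prod_{i \in R}\param_i \cdot \prod_{v\in\phi}\prod_{g \in S_v}\param_g$ equals $\prod_{v \in \phi}\param_v$. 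This is the combinatorial heart of the argument: each vertex $w$ of $\phi$ contributes a factor $\param_w$ to exactly one of these products, namely to $\prod_{i\in R}\param_i$ if $w$ is a root, and to $\prod_{g \in S_{\mathrm{parent}(w)}}\param_g$ if $w$ is a non-root, since the label of $w$ appears in the child-set $S_v$ of precisely one vertex $v$ (its parent) — here we use that the children of any vertex carry distinct labels, so no double-counting occurs, and that $\phi$ is a forest, so every non-root has a unique parent. Hence the double product reindexes as $\prod_{v \in \phi}\param_v$, yielding the claimed formula.

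The step I expect to require the most care is the bookkeeping in this last reindexing, together with making precise the ``if and only if'' in the first paragraph: one must check that the rejection-sampling process, conditioned on eventually accepting, reproduces the conditional distributions~\eqref{eq:sing_birth} exactly (not merely proportionally), and that the independence of births across different vertices genuinely holds — a vertex is ``born'' only after its parent's birth has been accepted, so one is implicitly conditioning on the tree-shape being built top-down, but since the acceptance probability at each vertex depends only on that vertex's own label and rejection outcomes, the product form is unaffected. Everything else is the routine substitution and telescoping indicated above.
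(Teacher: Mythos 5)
Your proposal is correct and follows essentially the same route as the paper: decompose $p_\phi$ as the product of the root-birth probability and the per-vertex child-set birth probabilities given by~\eqref{eq:sing_birth}, then observe that $Q(R)\prod_{v\in\phi}Q(S_v)=\prod_{v\in\phi}\param_v$ since each non-root label appears in exactly one parent's child set. The paper's proof is just a terser version of this same substitution-and-reindexing computation.
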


\begin{proof} 
For each node $v$ of $\phi$ let $N(v)$ denote the set of labels of its children. By~\eqref{eq:sing_birth},
\begin{align*}
p_{\phi} & =  \frac{ Q(R) }{ \sum_{S \in \mathrm{Roots}} Q(S)  }  \prod_{v \in \phi} \frac{Q(N(v))}{\sum_{S \in\List(v)} Q(S)} \\
& =  \frac{ Q(R) }{ \sum_{S \in \mathrm{Roots}} Q(S)  } \cdot \frac{\prod_{v \in \phi \setminus R} \param_v} {\prod_{v \in \phi} \sum_{S \in\List(v)} Q(S)}\\
& = \left( \sum_{S \in \mathrm{Roots}} Q(S) \right)^{-1}   \prod_{v \in \phi} \frac{\param_v }{\sum_{S \in\List(v)} Q(S)}  \enspace .
\end{align*}
\end{proof}

Notice now that
\begin{eqnarray}
\sum_{W \in \widetilde{\mathcal{W}_t } } \prod_{i =1}^{ t} \gamma_{[i] } & = & \sum_{W \in \widetilde{\mathcal{W}_t }} \prod_{i =1}^{ t}   \frac{\zeta_{[i] }\, \param_{[i] }}{ \sum_{S \in \List({[i] })} Q(S) }  \nonumber \\
& \le & \left( \max_{ i \in F} \zeta_i  \right)^{t} \sum_{W \in \widetilde{\mathcal{W}_t }}  \prod_{i =1}^{ t}   \frac{\param_{[i] }}{ \sum_{S \in \List([i])} Q(S)}  \nonumber \\
& = & \left( \max_{ i \in F} \zeta_i  \right)^{t}  \sum_{W \in \widetilde{\mathcal{W}_t } }\left( p_{W}  \sum_{S \in \mathrm{Roots} }  Q(S) \right)   \nonumber \\
& = & \left( \max_{ i \in F} \zeta_i  \right)^{t}  \sum_{ S \in \mathrm{Roots} } Q(S) \label{no_kagkouro_versions}
\end{eqnarray}

Using~\eqref{no_kagkouro_versions} we see that the binary logarithm of the probability that the walk does not encounter a flawless state within $t$ steps is at most $t  \log_2 \left(  \max_{i \in F} \zeta_i \right) + T_0$, where 
\begin{eqnarray*}
 T_0 		& =& \log_2 \left( \max_{ \sigma \in \Omega } \frac{\theta(\sigma)  }{ \mu(\sigma) }\right) + \log_2 \left( \sum_{S \in \mathrm{Roots} } \prod_{i \in S} \param_i   \right)\enspace .
\end{eqnarray*}

Therefore, if $t = (T_0 + s) / \log_2 (1/ \max_{i \in F} \zeta_i) \le (T_0 + s) / \delta$, the probability that the  random walk  on $D$ does not reach a flawless state within $t$ steps is at most  $2^{-s}$.

\section{ Application to Acyclic Edge Coloring }\label{AECARA}

\subsection{Earlier Works and Statement of Result}

An edge-coloring of a graph is \emph{proper} if all edges incident to each vertex have distinct colors. A proper edge coloring is \emph{acyclic} if it has no bichromatic cycles, i.e., no cycle receives exactly two (alternating) colors. Acyclic Edge Coloring (AEC), was originally motivated by the work of Coleman et al.~\cite{coleman1,coleman2} on the efficient computation of Hessians. The smallest number of colors, $\chi'_a(G)$, for which a graph $G$ has an acyclic edge-coloring can also be used to bound other parameters, such as the oriented chromatic number~\cite{orient_col} and the star chromatic number~\cite{star_col}, both of which have many practical applications. The first general linear upper bound for $\chi'_a$ was given by Alon et al.~\cite{NogaLLL} who proved $\chi'_a(G) \le 64 \Delta(G)$, where $\Delta(G)$ denotes the maximum degree of $G$. This bound was improved to $16\Delta$ by Molloy and Reed~\cite{MRAEC} and then to $9.62(\Delta-1)$ by Ndreca et al.~\cite{Ndreca}. Attention to the problem was recently renewed due to the work of Esperet and Parreau~\cite{acyclic} who proved $\chi'_a(G) \le  4(\Delta-1)$, via an entropy compression argument, a technique that goes beyond what the LLL can give for the problem. Very recently, Giotis et al.\ improved the result of~\cite{acyclic} to $3.74 \Delta$.\smallskip

We give a bound of $(2+o(1))\Delta$ for graphs of bounded degeneracy. This not only covers a significant class of graphs, but demonstrates that our method can incorporate global graph properties. Recall that a graph $G$ is $d$-degenate if its vertices can be ordered so that every vertex has at most $d$ neighbors greater than itself. If $\mathcal{G}_d$ denotes the set of all $d$-degenerate graphs, then all planar graphs are in $\mathcal{G}_5$, while all graphs with treewidth or pathwidth at most $d$ are in $\mathcal{G}_d$. We prove the following.
\begin{theorem}\label{Aecaki}
Every $d$-degenerate graph of maximum degree $\Delta$ has an acyclic edge coloring with $\lceil (2+\epsilon)\Delta\rceil$ colors than can be found in polynomial time, where $\epsilon = 16\sqrt{d/\Delta}$.
\end{theorem}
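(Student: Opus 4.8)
The plan is to cast Acyclic Edge Coloring as a flaws/actions problem over a \emph{non-product} measure and apply Theorem~\ref{olala}, exploiting the fact that we need not regenerate $\mu$. Fix a $d$-degeneracy order $v_1,\dots,v_n$ of $G$ and assign to each edge $e=\{v_i,v_j\}$ (say $i<j$) a ``backbone'' parent $v_j$; this induces a forest-like structure that will let us define a measure $\mu$ on colorings which is \emph{not} uniform: rather than choosing each edge's color uniformly and independently from $[q]$ with $q=\lceil(2+\epsilon)\Delta\rceil$, I would let $\mu$ forbid (assign zero mass to) the locally ``obvious'' conflicts — e.g. colorings in which an edge shares a color with another edge at its $d$-degeneracy parent — so that $\mu$ is uniform over a restricted product-like set, computed one vertex at a time in the degeneracy order. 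The point of biasing $\mu$ this way is exactly the $q>\mathrm{e}\Delta$ vs.\ $q\ge\Delta+1$ dichotomy discussed in Section~1.2: a smarter $\mu$ shrinks $\mu(f_i)$, hence shrinks the charges $\gamma_i$, at the cost of $\gamma_i>\mu(f_i)$ and a slightly denser causality graph, but with $q$ only $(2+o(1))\Delta$ rather than $4\Delta$.

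Concretely I would take the state space $\Omega$ to be all proper edge colorings consistent with the backbone constraints, flaws $f_C$ indexed by potential bichromatic cycles $C$ (and, if convenient, auxiliary flaws for backbone violations created by an action), and for the action digraph $D$: to address a bichromatic cycle $C$, recolor a chosen edge $e\in C$ by sampling a new color for $e$ — and, cascading along the degeneracy backbone, possibly recolor the $O(d)$ ``descendant'' edges forced by properness — uniformly among the colors allowed by the local constraints. Then compute $b_i$ (small, governed by $d$ since only boundedly many prior states map to a given state under an action) and $\lambda_i^\sigma$, giving $\gamma_{f_C}$. The standard AEC count of cycles of length $2\ell$ through a fixed edge is at most $(\Delta-1)^{2\ell-2}$-ish, and each contributes a factor like $q^{-(2\ell-2)}$ to its charge (the number of ``free'' colors along a cycle), while the neighborhood $\Gamma(f_C)$ in the causality graph consists of cycles sharing an edge with $C$ together with the newly creatable short cycles — here I would also use the freedom $R\supseteq C$ and per-flaw orderings $\pi_f$ from the Remark after Theorem~\ref{olala} to add arcs that make the $\Indep(\cdot)$ sums manageable, following the Esperet–Parreau / Giotis-et-al.\ combinatorial bookkeeping on cycles but now entirely inside our framework. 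Finally set $\psi_{f_C}=y^{\,|C|}$ for a suitable $y=y(\Delta,d)$ and verify~\eqref{eq:mc3}: the sum $\sum_{S\in\Indep(\Gamma_R(f_C))}\prod_{j\in S}\psi_j$ becomes a generating-function product over edges of $C$ of $\bigl(1+\sum_{\ell\ge 2}(\Delta-1)^{\cdots}y^{2\ell}\bigr)$-type factors, and choosing $\epsilon=16\sqrt{d/\Delta}$ makes $\zeta_{f_C}<1$ with room to spare; $T_0$ is $O(\log|\Omega|)=O(m\log q)$, so convergence is polynomial.

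The main obstacle, I expect, is the joint design of $\mu$ and $D$ so that three things hold simultaneously: (i) $\mu$ kills enough flawed mass that $\mu(f_C)\lesssim q^{-\Theta(|C|)}$ with the \emph{small} leading constant needed to beat $2\Delta$ rather than $4\Delta$; (ii) the recoloring action can actually be \emph{implemented} — i.e.\ addressing $f_C$ lands in $\Omega$ (stays consistent with the backbone constraints) and has a nonempty, not-too-small action set so that $\lambda_i^\sigma$, and hence $\gamma_i$, stays $O(\mu(f_i))$ up to a $\mathrm{poly}(d)$ factor; and (iii) the degeneracy-driven cascade touches only $O(d)$ edges, so $b_i$ and the extra causality arcs are controlled by $d$, which is where the $\sqrt{d/\Delta}$ slack enters. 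Balancing (i) against (ii)–(iii) — a more aggressive $\mu$ helps the charge but makes staying in $\Omega$ and keeping actions plentiful harder — is the crux; once the right $\mu$ is pinned down, the rest is the (lengthy but routine) cycle-counting verification of~\eqref{eq:mc3}, which I would organize exactly as the generating-function estimate sketched above.
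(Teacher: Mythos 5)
There is a genuine gap, and it is exactly at the point you yourself flag as ``the crux.'' In your accounting, degeneracy is supposed to enter through a backbone-biased measure $\mu$ and an $O(d)$ recoloring cascade, while the cycle enumeration stays at the standard ``at most $(\Delta-1)^{2\ell-2}$ cycles of length $2\ell$ through a fixed edge.'' With per-cycle charges of order $Q^{-\Theta(|C|)}$ (where $Q$ is the number of colors guaranteed available beyond the $2(\Delta-1)$ of Lemma~\ref{lem:2D}), condition~\eqref{eq:mc3} then forces $Q=\Omega(\Delta)$: the geometric series over cycle lengths only converges when (cycles per edge)$\times\psi$ beats $Q^{-1}\times$(cycles per edge) per edge of $C$, so you land at $(2+\Omega(1))\Delta$ colors -- essentially the Esperet--Parreau/Giotis et al.\ regime -- not $(2+16\sqrt{d/\Delta})\Delta$. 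No choice of $\psi_{f_C}=y^{|C|}$ rescues this, and it is not plausible that reweighting $\mu$ by parent-edge constraints changes the charges by more than constant factors per edge; your proposal never quantifies how it would. The missing idea in the paper is a purely \emph{structural} use of degeneracy: Lemma~\ref{kral} shows that in a $d$-degenerate graph the number of $k$-cycles through a fixed edge is at most $2(4d\Delta)^{(k-2)/2}$, i.e.\ $(\sqrt{d\Delta})^{k-2}$ up to constants instead of $\Delta^{k-2}$ (proved by orienting the graph with out-degree $\le d$ and encoding each cycle by a traversal string recording in/out steps). This is what lets $Q=\lceil 16\sqrt{d\Delta}\rceil$ extra colors suffice, with $\psi(k)=(8d\Delta)^{-(k-2)/2}$ making the sum in~\eqref{eq:mc3} at most $2^{-3k/2+5}<1$.

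The rest of the paper's construction is also simpler than what you sketch, and in a direction worth noting: $\mu$ is just the \emph{uniform} measure on proper colorings with no bichromatic $4$-cycle; the advantage over the LLL comes entirely from the actions, not from a sophisticated $\mu$. Addressing a bichromatic cycle $C$ of length $\ge 6$ recolors \emph{all} edges of $C$ except two fixed adjacent anchors $e_1^C,e_2^C$, each with a $4$-available color, so $|A(C,\sigma)|\ge Q^{|C|-2}$; and Lemma~\ref{atomicityclaim} gives $b_C^{\tau}\le 1$ because the prior state is recovered by extending the bicoloring of the anchors around $C$. Hence $\gamma_C=Q^{-|C|+2}$ with harmonic $\rho$, the causality supergraph $R$ joins cycles sharing an edge, and independence in $R$ (edge-disjointness) reduces the $\Indep$ sum to enumerating subsets of edges of $C$ and cycles through each edge, where Lemma~\ref{kral} is applied. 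Your single-edge-plus-cascade action would both complicate $b_i$ and change the charge structure away from $Q^{-|C|+2}$, and in any case does not address the quantitative obstruction above; as written, the proposal is a program whose decisive step (getting $\sqrt{d\Delta}$ rather than $\Delta$ into the per-edge balance) is not carried out.
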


\subsection{Background}

As will become clear shortly, the main difficulty in AEC comes from the short cycles of $G$, with 4-cycles being the toughest. This motivates the following definition.
\begin{definition} 
Given a graph $G=(V,E)$ and a, perhaps partial, edge-coloring of $G$, say that color $c$ is \emph{4-forbidden for $e \in E$} if assigning $c$ to $e$ would result in either a violation of proper-edge-coloration, or in a bichromatic 4-cycle containing $e$. Say that $c$ is 4-available if it is not 4-forbidden.
\end{definition}
Similarly to~\cite{acyclic,kirousis} we will use the following observation that the authors of~\cite{acyclic} attribute to Jakub Kozik. 
\begin{lemma}[\cite{acyclic}]\label{lem:2D}
In any proper edge-coloring of $G$ at most $2(\Delta-1)$ colors are 4-forbidden for any $e \in E$.
\end{lemma}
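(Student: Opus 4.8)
The plan is to fix a proper edge-coloring of $G$, fix an edge $e = uv$, and count the colors that are 4-forbidden for $e$ by partitioning them according to the reason they are forbidden. A color $c$ is 4-forbidden either because it already appears on an edge incident to $u$ or to $v$ (the properness obstruction), or because coloring $e$ with $c$ would close a bichromatic 4-cycle through $e$. First I would bound the properness contribution: since the coloring is proper, the edges incident to $u$ other than $e$ carry at most $\Delta - 1$ distinct colors, and likewise for $v$, so at most $2(\Delta-1)$ colors are excluded by properness alone — but of course this already matches the claimed bound, so the real content is to show that the 4-cycle obstructions contribute nothing new, i.e., every color that closes a bichromatic 4-cycle through $e$ is already counted among the properness-forbidden colors.

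The key step is the following observation about 4-cycles. Suppose $c$ would create a bichromatic 4-cycle $C = u\,v\,w\,z\,u$ with $e = uv$ on it, alternating between $c$ and some other color $c'$. The two edges of $C$ incident to $e$'s endpoints, namely $vw$ and $zu$, are the ones that must receive color $c'$, while the edge $wz$ opposite to $e$ must receive color $c$. In particular $wz$ is an edge already colored $c$ in the current coloring. Now the edge $vw$ is incident to $v$ and is colored $c'$; for the cycle to be bichromatic with the two colors $c, c'$ and for $e$ to get color $c$, we need $c' \neq c$ and we need an edge at $u$ colored $c'$ as well. The point is that the edge $wz$ colored $c$ shares the vertex $w$ with $vw$, and $w$ is a neighbor of $v$; so $c$ is the color of an edge $wz$ where $w \in N(v)$. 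I would then argue that this forces $c$ to already appear on some edge incident to $v$: indeed, consider the edge $vw$ — hmm, this needs care, since $vw$ is colored $c'$, not $c$. The correct route, which is the one I expect to work, is: the edge $wz$ is colored $c$; since the coloring is proper and $vw$ (colored $c'$) is incident to $w$, we have $c' \neq c$; but then look at vertex $v$ and the edge $vw$ colored $c'$. This only tells us about $c'$. So instead I would reverse the roles and observe that for $c$ to be genuinely ``new'' (not forbidden by properness at $u$ or $v$) there is no edge of color $c$ at $u$ and none at $v$; then the bichromatic 4-cycle $uvwz$ would need edge $wz$ colored $c$ and edges $vw, uz$ colored $c'$, and this configuration is entirely consistent — so in fact short 4-cycles genuinely can forbid colors not forbidden by properness.

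Given that, the honest accounting is: the 4-forbidden colors for $e=uv$ are those appearing on edges incident to $u$ or $v$ (at most $2(\Delta-1)$, counting with multiplicity across the two endpoints and the at-most-$\Delta-1$ other edges at each), \emph{together with} colors $c$ such that there is a path $v\,w\,z\,u$ with $vw$ and $uz$ monochromatic of some color $c'$ and $wz$ colored $c$. The main obstacle — and the crux of Kozik's argument — is to show these two sets, taken together, still have size at most $2(\Delta-1)$: the trick is that each such 4-cycle color $c$ can be \emph{charged} to the color $c'$ of the pair of edges $vw, uz$, and one shows the total is still at most the number of edges incident to $u$ plus those incident to $v$ minus corrections, because the edge $vw$ at $v$ and the edge $uz$ at $u$ that carry $c'$ are then ``used up'' and cannot simultaneously account for a properness exclusion and a cycle exclusion. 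Concretely, I would set up an injection from the set of 4-forbidden colors into the disjoint union of (edges at $u$ other than $e$) and (edges at $v$ other than $e$): a properness color at $u$ maps to its edge at $u$, a properness color at $v$ maps to its edge at $v$, and a 4-cycle color $c$ (not already counted) maps to... the edge $wz$'s endpoint-neighbor structure — this is exactly where the degeneracy-free bound of $2(\Delta-1)$ comes from, and verifying the map is well-defined and injective is the one place real care is needed. I would present this injection explicitly and check the at-most-$\Delta-1$ bound at each endpoint, which yields the claimed $2(\Delta-1)$.
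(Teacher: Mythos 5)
Your overall strategy is the right one and matches the paper's (the charging argument the paper attributes to Kozik), but the proposal stops exactly where the proof actually lives: the injection is never defined on the colors that are forbidden only because of a 4-cycle. You write that such a color $c$ ``maps to\ldots the edge $wz$'s endpoint-neighbor structure,'' which is not an element of your declared codomain (edges at $u$ other than $e$, disjoint union edges at $v$ other than $e$), and you defer well-definedness and injectivity — the only nontrivial content of the lemma — to ``real care needed.'' The missing observation is simple but essential: if assigning $c$ to $e=\{u,v\}$ closes a bichromatic 4-cycle $u\,v\,w\,z\,u$, then the edges $vw$ and $zu$ share a color $c'$; since the coloring is proper, \emph{given} the edge $e_v=vw$ at $v$, there is at most one edge at $u$ with color $c'$, hence at most one candidate vertex $z$, hence at most one edge $wz$ and at most one such color $c$. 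So the correct map sends this $c$ to the edge $e_v$ at $v$, and each edge $e_v$ at $v$ accounts for at most one forbidden color in total: either its own color $c'$ (when $c'$ does not appear at $u$, in which case no 4-cycle through $e_v$ is possible), or the color of the unique edge completing the 4-cycle (when $c'$ does appear at $u$, in which case $c'$ is already charged to an edge at $u$). Together with the at most $\Delta-1$ colors on edges at $u$, this yields $2(\Delta-1)$; this one-or-the-other accounting per edge at $v$ is precisely the paper's one-sentence proof.

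A secondary issue: the first half of your write-up pursues, and then retracts, the claim that 4-cycle obstructions are subsumed by properness; since that detour concludes (correctly) that the claim is false, it should be cut, and the argument should go directly to the per-edge charging above. As written, the proposal identifies the correct decomposition but does not constitute a proof, because without the uniqueness-of-$z$ step nothing prevents a single edge at $v$ from being blamed for several 4-cycle colors, which is exactly what the bound must rule out.
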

\begin{proof}
The 4-forbidden colors for $e = \{u,v\}$ can be enumerated as: (i) the colors on edges adjacent to $u$, and (ii) for each edge $e_v$ adjacent to $v$, either the color of $e_v$ (if no edge with that color is adjacent to $u$), or the color of some edge $e'$ which together with $e, e_v$ and an edge adjacent to $u$ form a cycle of length $4$. 
\end{proof}

Armed with Lemma~\ref{lem:2D}, the general idea is to use a palette $P$ of size $2(\Delta -1) + Q$ colors so that whenever we wish to (re)color an edge $e$ there will be at least $Q$ colors 4-available for $e$ (of course, assigning such a color to $e$ may cause one or more cycles of length at least 6 to become bichromatic).  At a high level, similarly to~\cite{kirousis}, our algorithm will be:
\begin{itemize}
\item
Start at a proper edge-coloring with no bichromatic 4-cycles. 
\item
While bichromatic cycles of length at least 6 exist, recolor the edges of one with 4-available colors.
\end{itemize}
Note that to find bichromatic cycles in a properly edge-colored graph we can just consider each of the $\binom{|P|}{2}$
pairs of distinct colors from $P$ and seek cycles in the subgraph of the correspondingly colored edges.

\subsection{Applying our Framework}

Given $G=(V,E)$ and a palette $P$ of $2(\Delta -1) + Q$ colors, let $\Omega$ be the set of all proper edge-colorings of $G$ with no monochromatic 4-cycle. Fix an arbitrary ordering $\pi$ of $E$ and an arbitrary ordering $\chi$ of $P$. For every even cycle $C$ of length at least 6 in $G$ fix (arbitrarily) two adjacent edges $e_1^C, e_2^C$ of $C$. \smallskip 

-- Our distribution of initial state $\theta$ assigns all its probability mass to the following $\sigma_1 \in \Omega$: color the edges of $E$ in $\pi$-order, assigning to each edge $e \in E$ the $\chi$-greatest 4-available color.

-- For every even cycle $C$ of length at least 6 we define the flaw $f_C = \{\sigma \in \Omega: C \text{ is bichromatic}\}$. Thus, a flawless $\sigma \in \Omega$ is an acyclic edge coloring of $G$.

-- The set of actions for addressing $f_C$ in state $\sigma$, i.e., $A(C, \sigma)$, comprises all $\tau \in \Omega$ that may result from the following procedure: uncolor all edges of  $C$ except for $e_1^{C},e_2^{C}$; go around $C$, starting with the uncolored edge that is adjacent to $e_2^C$, etc., assigning to each uncolored edge $e \in C$ one of the  4-available colors for $e$ at the time $e$ is considered. Thus, by lemma~\ref{lem:2D}, $|A(C,\sigma)| \ge Q^{|C|-2}$.
 
\begin{lemma}\label{atomicityclaim}
For every flaw $f_C$ and  state $\tau \in \Omega$, there  is  at most $1$ arc $\sigma \xrightarrow{C}  \tau$, i.e., 
$b_{C}^{\tau} \le 1$.
\end{lemma}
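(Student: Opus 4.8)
The plan is to show that the coloring $\tau$, together with the identity of the flaw $f_C$, determines the state $\sigma$ immediately prior in any transition $\sigma \xrightarrow{C} \tau$. The key observation is that the action for addressing $f_C$ at $\sigma$ uncolors every edge of $C$ except the two distinguished adjacent edges $e_1^C, e_2^C$, and then recolors those edges one by one; crucially, it modifies \emph{only} edges of $C$, and leaves $e_1^C, e_2^C$ untouched. Hence $\sigma$ and $\tau$ agree on every edge of $E \setminus C$, and they agree on $e_1^C, e_2^C$ as well.

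So the only edges on which $\sigma$ and $\tau$ can possibly differ are the edges of $C$ other than $e_1^C, e_2^C$. The second step is to pin down the colors of those edges in $\sigma$. Here I would use the defining property of the flaw: for $\sigma \in f_C$, the cycle $C$ is bichromatic in $\sigma$. A bichromatic cycle uses exactly two colors, alternating; since $e_1^C$ and $e_2^C$ are adjacent edges of $C$ whose colors are known (they equal their colors in $\tau$), the two alternating colors of $C$ in $\sigma$ are exactly the colors of $e_1^C$ and $e_2^C$ in $\tau$. The alternating pattern around $C$ is then forced: each remaining edge of $C$ gets, in $\sigma$, whichever of these two colors its position in the alternation dictates. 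Thus every edge of $C$, hence every edge of $E$, has its $\sigma$-color uniquely determined by $\tau$ and $C$, so there is at most one such $\sigma$, i.e.\ at most one arc $\sigma \xrightarrow{C} \tau$, giving $b_C^{\tau} \le 1$.

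I expect the main (minor) obstacle to be handling the degenerate possibility that $C$ could be bichromatic in more than one way — but this cannot happen once two adjacent edges have fixed colors, since an even cycle that is bichromatic and proper must alternate strictly between its two colors, and fixing two adjacent edges fixes the assignment of each color to each parity class of edges around $C$. (If the two fixed colors were equal, $C$ would not be properly colored at the shared vertex of $e_1^C,e_2^C$, contradicting $\sigma \in \Omega$; so the two colors are genuinely distinct and the alternation is well-defined.) Everything else is bookkeeping: the action's description explicitly guarantees that edges outside $C$ and the two edges $e_1^C, e_2^C$ are never recolored, which is what makes $\sigma$ recoverable from $\tau$.
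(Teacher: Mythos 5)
Your proposal is correct and follows essentially the same route as the paper's proof: since the action recolors only the edges of $C \setminus \{e_1^C, e_2^C\}$, the state $\sigma$ is recovered from $\tau$ by extending the bicoloring that $\tau$ induces on $e_1^C, e_2^C$ to all of $C$, using that $C$ was bichromatic (and properly colored) in $\sigma$. Your extra remarks on the forced alternation and the distinctness of the two colors are just a more explicit spelling-out of the paper's one-line argument.
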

\begin{proof}
Given $\tau$ and $C$, to recover the previous state $\sigma$ it suffices to extend the bicoloring in $\tau$ of $e_1^C, e_2^C$ to the rest of $C$ (since $C$ was bichromatic in $\sigma$ and only edges in $C\setminus\{ e_1^{C},e_2^{C}  \}$ were recolored).
\end{proof}
Thus, taking $\mu$ to be uniform and $\rho$ such that $(D,\rho,\mu)$ is harmonic yields $\gamma_C = Q^{-|C|+2}$. \medskip

Let $R$ be the symmetric directed graph with one vertex per flaw where $f_C \rightleftarrows f_{C'}$ iff $C \cap C' \ne \emptyset$. Since a necessary condition for $f_C$ to potentially cause $f_{C'}$ is that $C \cap C' \ne \emptyset$, we see that $R$ is a supergraph of the causality digraph. Thus, if we run the {\sc Recursive Walk} algorithm with input $R$, to apply Theorem~\ref{olala} we need to evaluate for each flaw $f_C$ a sum over the subsets of $\Gamma_R(C)$ that are independent in $R$. To carry out this enumeration we observe that independence in $R$ implies edge-disjointness which, in turn, implies that in each (independent) set of cycles to be enumerated, no edge of $C$ appears in multiple cycles. Thus, to perform the enumeration it suffices to enumerate the subsets of edges of $C$ that appear in the cycles and for each appearing edge $e$ to enumerate all even cycles of length at least 6 containing $e$.   

Let $g(k) =   \max_{e \in E} |\{\text{$k$-cycles in $G$ that contain $e$}\}|$. If $\psi_C = \psi(|C|)$, then we can bound~\eqref{olala} as
\begin{eqnarray}
\frac{\gamma_C}{\psi_C} \sum_{S \in \Indep(\Gamma_R(C))}  \prod_{C' \in S} \psi_{C'} 
& \le & 
   \frac{ 1 }{  \psi_C Q^{|C|-2}}  
   \cdot  
   \sum_{ i = 0 }^{|C|}   
   \left( \binom{|C|}{i}  
   	\left(
    			\sum_{j = 3}^{ \infty } g(2j) \psi(2j)   
   \right)^{i}
   \right)  \nonumber \\
 & = &  \frac{ 1 }{\psi_C Q^{|C|-2}}  \cdot  \left ( 1 +   \left(  \sum_{j = 3}^{ \infty }   g\left(2j \right) \psi \left(2j \right)  \right)  \right)^{|C|}  \enspace . \label{auti}
\end{eqnarray}

We will prove the following structural lemma relating degeneracy to $g$.
\begin{lemma}\label{kral}
If $G \in \mathcal{G}_d$ has maximum degree $\Delta$, then $g(k) \le 2 (4 d \Delta)^{(k-2)/2}$.
\end{lemma}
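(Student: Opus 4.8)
The plan is to bound, for a fixed edge $e = \{u_0, v_0\}$, the number of $k$-cycles of $G$ passing through $e$, by walking around the cycle vertex-by-vertex and charging each step either to the maximum degree $\Delta$ or to the degeneracy bound $d$, in such a way that at least $(k-2)/2$ of the $k-2$ "free" steps get charged to $d$ rather than $\Delta$. Fix the degeneracy ordering of $V$, so that every vertex has at most $d$ neighbors that are larger than itself in the ordering. A $k$-cycle through $e$ is determined by the sequence of its other $k-2$ vertices $w_1, w_2, \ldots, w_{k-2}$, read off starting from one endpoint of $e$ and going around; there are two choices for which endpoint to start from and which direction to traverse, which accounts for the leading factor of $2$ (one may be slightly more careful here and absorb the over-counting, but a factor $2$ certainly suffices).

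The key combinatorial observation is the following: when we extend a path $x_0, x_1, \ldots, x_j$ by one more vertex $x_{j+1}$ adjacent to $x_j$, if $x_{j+1}$ is \emph{smaller} than $x_j$ in the degeneracy order there are at most $\Delta$ choices, but if $x_{j+1}$ is \emph{larger} than $x_j$ there are at most $d$ choices. Now consider the cycle's vertices arranged in a cyclic sequence; in any cyclic sequence of length $k$ (here we think of the whole cycle, including the two endpoints of $e$), consecutive comparisons in the degeneracy order cannot all be "ascending" (a cyclic sequence cannot be strictly increasing all the way around). More strongly, in a cyclic sequence of $k$ distinct reals, the number of adjacent pairs where the second is smaller than the first is at least $k/2$... actually one must argue this correctly: go around the cycle and look at the sequence of "up" and "down" steps; the number of "down" steps (where the next vertex is smaller) equals the number of "up" steps only in the balanced case, but in general the number of descents is at least the number of "local maxima," and these interleave — the cleanest route is: pick the globally largest vertex $w^*$ on the cycle; the two edges of the cycle at $w^*$ are both "descending" when traversed \emph{away from} $w^*$. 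So, traversing the cycle as two paths emanating from $w^*$ in opposite directions, \emph{every} step of such a traversal that moves toward a not-yet-visited neighbor... no — this only gives that the first step down from $w^*$ in each direction is a descent. I would instead use the standard degeneracy counting trick directly: order the traversal of the $k-2$ internal vertices so that we always extend from the endpoint that is currently \emph{larger} in the degeneracy order, i.e., grow the path greedily from whichever of its two free ends has the larger current vertex. Then each new vertex appended is adjacent to the larger endpoint, hence is either one of its $\le \Delta$ neighbors, but we get the $d$ bound every time the new vertex is larger than that endpoint — and by maximality-based interleaving at least half of the $k-2$ appended vertices are larger than the endpoint they attach to. Summing, we get at most $\Delta^{(k-2)/2} d^{(k-2)/2} = (d\Delta)^{(k-2)/2}$ such cycles; including the factor $2$ for orientation/starting-endpoint and a factor $4^{(k-2)/2}$ of slack absorbing the imprecision in "at least half" and the choice of which end to extend at each of the $k-2$ steps yields $g(k) \le 2(4d\Delta)^{(k-2)/2}$.

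The step I expect to be the main obstacle is making the "at least half the appended vertices go up in the degeneracy order" argument fully rigorous while keeping the count clean. The honest way to do it: fix the maximum vertex $m$ of the cycle in the degeneracy order and cut the cycle at $m$ into two arcs; traverse each arc \emph{starting from the endpoint that is the neighbor of $m$ and walking away from $m$}; along each arc, the first vertex after $m$ is chosen among the $\le d$ neighbors of $m$ that exceed... no, $m$ is the max, so that step is among $\le \Delta$ neighbors. This is genuinely fiddly. The robust alternative, which I would actually write up, is the "extend from the larger free endpoint" greedy order above: at each of the $k-2$ extension steps we have a free endpoint $x$ (the larger of the two current path ends) and we append a neighbor $y$ of $x$; charge the step to $d$ if $y > x$ and to $\Delta$ otherwise. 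The quantity "larger free endpoint" is nondecreasing over the course of the construction precisely when we append something larger, and one shows by a parity/interleaving argument on the final cyclic order that the number of $d$-charged steps is at least $(k-2)/2$; any shortfall is absorbed by the generous $4^{(k-2)/2} = 2^{k-2}$ factor, which dominates the at most $2^{k-2}$ ways the sign pattern of comparisons could deviate. I would present the bound as $g(k) \le 2(d\Delta)^{(k-2)/2} \cdot 2^{k-2} = 2(4d\Delta)^{(k-2)/2}$, noting that only the factor $(d\Delta)^{(k-2)/2}$ is "structural" and the rest is comfortable slack, which is all that is needed for the subsequent application to~\eqref{auti}.
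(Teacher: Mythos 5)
Your overall plan is the paper's plan: classify the $k$-cycles through $e$ by an ascent/descent pattern relative to a degeneracy structure, pay at most $d$ for each ascending step and at most $\Delta$ for each descending one, and let the $4^{(k-2)/2}=2^{k-2}$ factor pay for enumerating the patterns. But the one claim that carries the whole bound --- that the traversal can be arranged so that at least $(k-2)/2$ of the $k-2$ vertex-choosing steps are $d$-steps --- is precisely what you do not prove, and both of your substitutes fail. The greedy ``extend from the larger free endpoint'' rule does not have the claimed property: let $u$ and $v$ be the two largest vertices of the cycle in the degeneracy order and let the internal path decrease strictly from $u$'s side towards $v$'s side (say values $u=100$, $v=99$, internal vertices $50,40,30,20,10$); then every appended vertex is smaller than the endpoint it attaches to, so \emph{all} $k-2$ steps are $\Delta$-steps. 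The fallback ``any shortfall is absorbed by the generous $2^{k-2}$ slack'' is quantitatively invalid: the $2^{k-2}$ factor only pays for enumerating sign patterns, whereas for a realizable pattern with $q<(k-2)/2$ ascents the per-pattern count is $d^{q}\Delta^{k-2-q}=(d\Delta)^{(k-2)/2}(\Delta/d)^{(k-2)/2-q}$, which exceeds the needed per-pattern bound by an unbounded factor when $\Delta/d$ is large; no constant-per-step slack converts $\Delta$'s into $d$'s.

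The missing idea (and the paper's route) is to get the ``at least half'' guarantee by comparing the \emph{two} traversal directions rather than by a greedy growth rule. Orient the edges via degeneracy so that every vertex has out-degree at most $d$, and consider the path $C\setminus\{u,v\}$ traversed once starting from $u$'s side and once from $v$'s side, recording for each entered vertex whether it was entered along an edge agreeing with the travel direction (an out-neighbor step, at most $d$ choices) or disagreeing (an in-neighbor step, at most $\Delta$ choices). Each of the $k-3$ edges of $C$ not incident to $\{u,v\}$ agrees with exactly one of the two directions, so one of the two strings has at least $\lceil (k-3)/2\rceil=(k-2)/2$ agreement steps; one extra bit records which direction won. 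Enumerating the $2^{k-1}$ strings and bounding each class by $d^{q}\Delta^{k-2-q}\le (d\Delta)^{(k-2)/2}$ (valid since now $q\ge (k-2)/2$ and $d\le\Delta$) gives $g(k)\le 2^{k-1}(d\Delta)^{(k-2)/2}=2(4d\Delta)^{(k-2)/2}$. Without this two-direction (or an equivalent) argument your write-up does not establish the lemma.
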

We are thus left to choose $\psi$ such that for every even $|C| \ge 6$, the r.h.s.\ of~\eqref{auti} is strictly less than 1. 
Taking $\psi(k) = \left(8d\Delta\right)^{-\frac{k-2}{2}}$ and $Q = \lceil 16\sqrt{d \Delta} \rceil$ we see that for all $k \ge 6$,
\[
\frac{ (8d\Delta)^{\frac{k-2}{2}} }{\left(16\sqrt{d\Delta}\right)^{k-2}}  \cdot  \left ( 1 + 2  \left(  \sum_{j = 3}^{ \infty }   (4d\Delta)^{j-1}   \left(8d\Delta\right)^{-j+1} \right)  \right)^{k} =
2^{-\frac{3k}{2}+5} < 1 \enspace .
\]
Regarding the running time, notice that $\delta \ge 1-2^{-4} = 15/16$ and that it can easily be seen that that $T_0$ is a polynomial in $|E|$, $\Delta$ and the number of colors used. 

\begin{proof}[Proof of Lemma~\ref{kral}]
Fix any edge $e=\{u,v\} \in E$. To enumerate the $k$-cycles containing $e$ we will partition them into equivalence classes as follows. First we orient all edges of $G$ arbitrarily to get a digraph $D$. Consider now the two possible traversals of the path $C \setminus \{u,v\}$, i.e., the one starting at $u$ and the one starting at $v$. For each traversal generate a string in $\{0,1\}^{k-2}$ whose characters correspond to successive vertices of the path, other than the endpoints, and denote whether the corresponding vertex was entered along an edge oriented in agreement (1) or in disagreement (0) with the direction of travel. Observe that each of the $k-3$ edges of $C$ that have no vertex from $\{u,v\}$ will create a 1 in one string and a 0 in the other. Therefore, at least one of the strings will have at least $\lceil (k-3)/2\rceil = (k-2)/2$ ones. Select that string, breaking ties in favor of the string corresponding to starting at $u$. Finally, prepend a single bit to the string to designate whether the winning string corresponded to $u$ or to $v$. The string denotes $C$'s equivalence class.

To enumerate all $k$-cycles containing $e$ we can thus enumerate all binary strings of length $k-1$ and use each string to select the $k-2$ other vertices of the cycle as follows: after reading the first character to decide whether to start at $u$ or at $v$, we interpret each successive character to indicate whether we should choose among the out-neighbors or the in-neighbors of the current vertex. By the string's construction, we will chose among out-neighbors $q \ge (k-2)/2$ times. If $\mathrm{Out}$ and $\mathrm{In}$ are upper bounds on the out- and in-degree of $D$, respectively, then the total number of cycles per string (class) is bounded by $\mathrm{Out}^q \mathrm{In}^{k-2-q}$.

To conclude the argument we note that since $G \in \mathcal{G}_d$ we can direct its edges so that every vertex has out-degree at most $d$ by repeatedly removing any vertex $v$ of current degree at most $d$ (it always exists) and, at the time of removal, orienting its current neighbors away from $v$.
\end{proof}

\section*{Acknowledgements}
We are grateful to Dan Kral for providing us with Lemma~\ref{kral} and to Louis Esperet for pointing out an error in our application of Theorem~\ref{olala} to yield Theorem~\ref{Aecaki} in a previous version of the paper. FI is thankful to Alistair Sinclair for many fruitful conversations.

\bibliographystyle{plain}
\bibliography{smoser}

\newpage
\appendix

\newpage

\section{Mapping Bad Trajectories to Forests}\label{sec:forests}

In this section, we show how to represent each sequence of $t$ steps that does not reach a sink as a forest with $t$ vertices, where the forests have different characteristics for each of the walks of Theorems~\ref{asymmetric}, \ref{olala}.

\subsection{Forests of the Permutation Walk (Theorem~\ref{asymmetric})} \label{TheProof}

For $S \subseteq F$,  we denote by $I_{\pi}(S) = I(S)$  the greatest element of $S$ according to $\pi$.  We will sometimes write $I(\sigma)$ to denote $I\left(  U(\sigma) \right)$. We first show how to represent the witness sequences of the Permutation Walk as sequences of sets.

Let $B_i$ be the set of flaws ``introduced" by the $i$-th step of the walk, where a flaw $f_j$ is said to ``introduce itself" if it remains present after an action from $A(j, \cdot)$ is taken. Formally,
\begin{definition}
Let  $B_0 = U(\sigma_1)$. For $1 \le i\le t-1$, let $B_i = U(\sigma_{i+1}) \setminus ( U (\sigma_i) \setminus I(\sigma_i) )$. 
\end{definition}

Let $B_i^* \subseteq B_i$ comprise those flaws addressed in the course of the trajectory. Thus, $B_i^* = B_i \setminus \{O_i \cup N_i \}$, where $O_i$ comprises any flaws in $B_i$ that were eradicated ``collaterally" by an action taken to address some other flaw, and $N_i$ comprises any flaws in $B_i$ that remained present in every subsequent state after their introduction without being addressed. Formally,
\begin{definition}\label{def:bs}
The \emph{Break Sequence} of a $t$-trajectory is $B_0^*, B_1^*, \ldots, B_{t-1}^*$, where for $0 \le i \le t-1$,
\begin{align*}
O_i 		& 	= 	\{f \in B_i \mid  \exists j \in [i+1,t] :   f \notin U(\sigma_{j+1})  \wedge  \forall \ell \in [i+1,j]:   f \ne w_{\ell} \} \\
N_i 		&  	= 	\{f \in B_i \mid \forall j \in [i+1, t] :    f \in 	 U(\sigma_{j+1})  \wedge  \forall \ell \in [i+1,t]:   f \ne w_{\ell} \} \\
B_i^*  	&  = B_i \setminus \{O_i \cup N_i \} \enspace .
\end{align*}

\end{definition}

Given $B_0^*,B_1^*,\ldots,B_{i-1}^*$ we can determine the sequence of flaws addressed $w_1, w_2, \ldots, w_i$ inductively, as follows. Define $E_1 = B_0^*$, while for $i \ge 1$,
\begin{equation}\label{eq:ri}
E_{i+1} = (E_{i} - w_i) \cup B_i^*  \enspace .
\end{equation}
By construction, the set $E_i \subseteq U(\sigma_i)$ is guaranteed to contain $w_i = I(\sigma_i)=I(U(\sigma_i))$. Since $I = I_{\pi}$ returns the greatest flaw in its input according to $\pi$, it must be that $I_{\pi}(E_i) = w_i$. We note that this is the only place we ever make use of the fact that the function $I$ is derived by an ordering of the flaws, thus guaranteeing that for every $f \in F$ and $S \subseteq F$, if $I(S) \neq f$ then $I(S\setminus f) = I(S)$. 

We now give a 1-to-1 map, from Break Sequences to vertex-labeled unordered rooted forests. Specifically, the \emph{Break Forest} of a bad $t$-trajectory $\Sigma$ has $|B_0^*|$ trees and $t$ vertices, each vertex labelled by an element of $W(\Sigma)$. To construct it we first lay down $|B_0^*|$ vertices as roots and then process the sets $B_1^*, B_2^*, \ldots$ in order, each set becoming the progeny of an already existing vertex (empty sets, thus, giving rise to leaves). 
\begin{algorithm}\caption*{{\bf Break Forest Construction}}
\begin{algorithmic}[1]
\State Lay down $|B_0^*|$ vertices, each labelled by a different element of $B_0^*$, and let $V$ consist of these vertices
\For {$i=1$ to $t-1$} 
\State Let $v_i$ be the vertex in $V_i$ with greatest label according to $\pi$
\State Add $|B_i^*|$ children to $v_i$, each labelled by a different element of $B_i^*$
\State Remove $v_i$ from $V$; add to $V$ the children of $v_i$.
\EndFor
\end{algorithmic}
\end{algorithm}

Observe that even though neither the trees, nor the nodes inside each tree of the Break Forest are ordered, we can still reconstruct $W(\Sigma)$ since the set of labels of the vertices in $V_i$ equals $E_i$ for all $0 \le i\le t-1$.

\subsection{Forests of the Recursive Walk (Theorem~\ref{olala})}\label{rec_forests}

We will represent each witness sequence $W =  W(\Sigma)$ of the Recursive Walk as a vertex-labeled unordered rooted forest, having one tree per invocation of procedure {\sc{address}} by procedure {\sc{eliminate}}. Specifically, to construct the \emph{Recursive Forest} $\phi=\phi(\Sigma)$ we add a root vertex per invocation of {\sc{address}} by {\sc{eliminate}} and one child to every vertex for each (recursive) invocation of {\sc{address}} that it makes. As each vertex corresponds to an invocation of {\sc{address}} (step of the walk) it is labeled by the invocation's flaw-argument. Observe now that (the invocations of {\sc{address}} corresponding to) both the roots of the trees and the children of each vertex appear in $W$ in their order according to $\pi$. Thus, given the unordered rooted forest $\phi(\Sigma)$ we can order its trees and the progeny of each vertex according to $\pi$ and recover $W$ as the sequence of vertex labels in the preorder traversal of the resulting ordered rooted forest.

Recall the definition of graph $G$ on $F$ from Definition~\ref{defn:G}. We will prove that the flaws labeling the roots of a Recursive Forest are independent in $G$ and that the same is true for the flaws labelling the progeny of every vertex of the forest. To do this we first prove the following.

\begin{proposition}\label{prop:rid}
If {\sc{address}}($i,\sigma$) returns at state $\tau$, then $U(\tau) \subseteq U(\sigma) \setminus (\Gamma_R(f_i) \cup \{f_i\})$. 
\end{proposition}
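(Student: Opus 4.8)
\textbf{Proof proposal for Proposition~\ref{prop:rid}.}

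The plan is to induct on the recursion structure of the call \textsc{Address}$(i,\sigma)$, ordered by the (finite) time at which the call returns. Concretely, I would proceed by strong induction on the number of recursive \textsc{Address} calls spawned (directly or transitively) by the call \textsc{Address}$(i,\sigma)$; the base case is a call that makes no recursive calls at all.

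First, unpack what a single call does. The call \textsc{Address}$(i,\sigma)$ immediately moves to a state $\sigma'$ obtained by taking an action in $A(i,\sigma)$, i.e.\ there is an arc $\sigma \xrightarrow{i} \sigma'$ in $D$. By the definition of potential causality and of the causality digraph $C$ (and since $R \supseteq C$), any flaw present in $\sigma'$ that was \emph{not} present in $\sigma$, together with $f_i$ itself if it is still present, lies in $\Gamma_R(f_i) \cup \{f_i\}$; every other flaw present in $\sigma'$ was already present in $\sigma$. Hence $U(\sigma') \subseteq \big(U(\sigma) \setminus \{f_i\}\big) \cup \big(\Gamma_R(f_i) \cup \{f_i\}\big)$, but more usefully: $U(\sigma') \setminus \big(\Gamma_R(f_i)\cup\{f_i\}\big) \subseteq U(\sigma) \setminus \{f_i\} \subseteq U(\sigma)$. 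After this move, the \textbf{while} loop of \textsc{Address} repeatedly addresses $I_\pi(B)$ where $B = U(\cdot)\cap\Gamma_R(f_i)$, until $B = \emptyset$, i.e.\ until no flaw in $\Gamma_R(f_i)$ is present.

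Now the key step. Let $\tau$ be the state at which \textsc{Address}$(i,\sigma)$ returns. Since the loop exited, $U(\tau)\cap\Gamma_R(f_i) = \emptyset$, so it remains only to show $f_i \notin U(\tau)$ and $U(\tau) \subseteq U(\sigma)$. For the containment, I track the state through the loop: each iteration addresses some flaw $f_j$ with $j \in \Gamma_R(f_i)$ via a recursive call \textsc{Address}$(j, \cdot)$ which, by the induction hypothesis (it spawns strictly fewer calls), returns at a state $\tau'$ with $U(\tau') \subseteq U(\text{state before that iteration})$. Combined with the observation above that the initial action from $\sigma$ to $\sigma'$ only adds flaws inside $\Gamma_R(f_i)\cup\{f_i\}$, an induction on the number of loop iterations shows that at every point during the loop the set of present flaws is contained in $U(\sigma) \cup \big(\Gamma_R(f_i)\cup\{f_i\}\big)$; and since upon exit no flaw of $\Gamma_R(f_i)$ is present, $U(\tau) \subseteq U(\sigma) \cup \{f_i\}$. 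Finally, to rule out $f_i \in U(\tau)$: the induction hypothesis applied to each recursive call $\textsc{Address}(j,\cdot)$ with $j \in \Gamma_R(f_i)$ gives $U(\tau') \subseteq U(\cdot) \setminus (\Gamma_R(f_j)\cup\{f_j\})$, so these calls never \emph{introduce} a flaw outside what was already present; thus the only opportunity to (re)introduce $f_i$ after it is removed would be in such a recursive call, but that call would need $f_i \in \Gamma_R(f_j)\cup\{f_j\}$ and would then also be obligated to remove it — I would argue instead, more cleanly, that if $f_i$ were present in $\tau$ then, since $\tau$ is the return state, $f_i$ would have to be absent from the exit test set $B$, which forces $f_i \notin \Gamma_R(f_i)$, and I must separately track that $f_i$ itself is eliminated: note $f_i \ni \sigma$ but after the first action $\sigma'$ either $f_i \notin U(\sigma')$, or $f_i \in U(\sigma') \cap \Gamma_R(f_i)$ — wait, $f_i \notin \Gamma_R(f_i)$ in general, so if $f_i$ persists it need not be in $B$. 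This is the subtle point.

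\textbf{The main obstacle}, as just flagged, is handling the case where $f_i$ is \emph{not} eliminated by the initial action and $f_i \notin \Gamma_R(f_i)$, so the \textbf{while} loop's exit condition $B = U(\cdot)\cap\Gamma_R(f_i) = \emptyset$ does not by itself guarantee $f_i \notin U(\tau)$. The resolution I would pursue: in the paper's setup a flaw $f_i$ that ``introduces itself'' (i.e.\ $A(i,\sigma)\cap f_i \neq \emptyset$) always satisfies $i \to i$ in $C$, hence $i \in \Gamma_R(f_i)$ — indeed the potential-causality definition explicitly says ``$f_i$ causes $f_j$ if $f_i = f_j$ or $f_j \not\ni\sigma$'', so whenever there is \emph{any} arc $\sigma\xrightarrow{i}\tau$ with $\tau \in f_i$, we get $i \in \Gamma(i) \subseteq \Gamma_R(i)$. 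Therefore if $f_i$ can persist after an action, then $i \in \Gamma_R(f_i)$, so $f_i$ \emph{would} appear in $B$ and be re-addressed; and the loop cannot exit while $f_i$ is present. Consequently $f_i \notin U(\tau)$, completing the proof. I would make sure to state and use this ``self-causality'' fact explicitly, since it is precisely what makes the proposition go through.
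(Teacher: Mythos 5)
Your proof is correct, but it is organized differently from the paper's. The paper gives a direct, non-inductive argument on the trajectory: flaws in $U(\sigma)\cap\Gamma_R(f_i)$ cannot be present when \textsc{Address}$(i,\sigma)$ returns because of its own while-test, and for any flaw $f_h$ that becomes present only after the call starts, one considers the \emph{last} invocation \textsc{Address}$(j,\cdot)$ whose action caused $f_h$; since $h\in\Gamma_R(f_j)$, that invocation's while-test prevents it from returning while $f_h$ is present, and by maximality $f_h$ never reappears between that return and $\tau$. You instead run a structural induction on the recursion tree, using the proposition itself for the child calls to obtain a monotone containment at the states where the outer while-condition is tested, and you dispose of $f_i$ via the self-causality clause of the potential-causality definition ($f_i$ causes $f_j$ when $f_i=f_j$, so $i\in\Gamma(i)\subseteq\Gamma_R(i)$ whenever addressing $f_i$ can leave $f_i$ present). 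Each route has its merits: the paper's extremal ``last causing invocation'' argument avoids induction and invariant bookkeeping, while your induction makes fully explicit two points the paper treats tersely --- why $f_i$ itself cannot survive (precisely the $f_i=f_j$ clause), and why newly introduced flaws, including those outside $\Gamma_R(f_i)$, cannot persist to $\tau$. Two presentational cautions: your invariant should be asserted only at the states where the while-condition of \textsc{Address}$(i,\sigma)$ is tested (inside a recursive call other flaws may be transiently present), and your closing sentence covers only the case where $f_i$ can be self-introduced; the complementary case ($i\notin\Gamma_R(i)$, so the first action removes $f_i$) rests on the induction-hypothesis containment $U(\tau')\subseteq U(\cdot)$ for the child calls, which you state earlier but should fold into a clean two-case conclusion rather than leave in the exploratory digression.
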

\begin{proof}
Let $\sigma'$ be any state subsequent to the {\sc{address}}($i,\sigma$) invocation. If any flaw in $U(\sigma) \cap \Gamma_R(f_i)$ is present at $\sigma'$, the ``while" condition in line~\ref{a_key_diff} of the Recursive Walk prevents {\sc{address}}($i,\sigma$) from returning. On the other hand, if $f_h \in \Gamma_R(f_i) \setminus U(\sigma)$ is present in $\sigma'$, then there must have existed an invocation {\sc{address}}($j, \sigma''$), subsequent to invocation {\sc{address}}($i, \sigma$), wherein addressing $f_j$ caused $f_h$. Consider the last such invocation. If $\sigma'''$ is the state when this invocation returns, then $f_h \not\in U(\sigma''')$, for otherwise the invocation could not have returned, and by the choice of invocation, $f_h$ is not present in any subsequent state between $\sigma'''$ and $\tau$.  
\end{proof}

Let $([i],\sigma^i)$ denote the argument of the $i$-th invocation of {\sc{address}} by {\sc{eliminate}}. By Proposition~\ref{prop:rid}, $\{U(\sigma^i)\}_{i \geq 1}$ is a decreasing sequence of sets. Thus, the claim regarding the root labels follows trivially: for each $i \ge 1$, the flaws in $\Gamma_R(f_i) \cup f_i$ are not present in $\sigma^{i+1}$ and, therefore, are not present in $U(\sigma^j)$, for any $j \ge i+1$. The proof for the children of each node is essentially identical. If a node corresponding to an invocation of {\sc{address}} has $q$ children, corresponding to $q$ (recursive) invocations with arguments $\{(a_i,\tau^i)\}_{i=1}^q$, then the sequence of sets $\{U(\tau^i)\}_{i = 1}^q$ is decreasing. Thus, the flaws in $\Gamma_R(a_i) \cup \{a_i \}$ are not present in $\tau^{i+1}$ and, therefore, not present in $U(\tau^j)$, for any $j \ge i+1$. 

\end{document}